\definecolor{MyBlue}{rgb}{0.12, 0.12, 0.76}
\let\oldReturn\Return
\renewcommand{\Return}{\State\oldReturn}
\newcommand{\thickhline}{%
    \noalign {\ifnum 0=`}\fi \hrule height 1.4pt
    \futurelet \reserved@a \@xhline
}
\newcolumntype{"}{@{\hskip\tabcolsep\vrule width 1.4pt\hskip\tabcolsep}}
\declaretheorem[name=Theorem,numberwithin=section]{thm}
\newtheorem{theorem}[thm]{Theorem}
\newtheorem{lemma}[theorem]{Lemma}
\newtheorem{corollary}{Corollary}[theorem]
\newtheorem{definition}{Definition}[section]
\theoremstyle{remark}
\newtheorem*{remark}{Remark}
\DeclareMathOperator*{\argmax}{arg\,max}
\DeclareMathOperator*{\argmin}{arg\,min}
\newcommand{\x}{\mathbf{x}}
\newcommand{\y}{\mathbf{y}}
\newcommand{\z}{\mathbf{z}}
\newcommand{\bfu}{\mathbf{u}}
\newcommand{\p}{\mathbf{p}}
\newcommand{\f}{\mathbf{f}}
\newcommand{\g}{\mathbf{g}}
\newcommand{\A}{\mathbf{a}}
\newcommand{\B}{\mathbf{b}}
\newcommand\bbr{\mathbb{R}}
\newcommand\ep{\varepsilon}
\newcommand\lam{\lambda}
\newcommand\bflam{\boldsymbol{\lam}}
\newcommand\q{\mathbf{q}}
\newcommand\bbrpos{\mathbb{R}_{\geq 0}}
\newcommand\xs{\mathbf{x^*}}
\newcommand\aprime{\mathbf{a'}}
\newcommand\bprime{\mathbf{b'}}
\newcommand\tat{t\^{a}tonnement\xspace}
\begin{document}

\title{Markets Beyond Nash Welfare for Leontief Utilities}
\author{Ashish Goel \and Reyna Hulett \and Benjamin Plaut}
\date{\{ashishg, rmhulett, bplaut\}@stanford.edu}

\maketitle


\begin{abstract}
\noindent We study the allocation of divisible goods to competing agents via a market mechanism, focusing on agents with Leontief utilities. The majority of the economics and mechanism design literature has focused on \emph{linear} prices, meaning that the cost of a good is proportional to the quantity purchased. Equilibria for linear prices are known to be exactly the maximum Nash welfare allocations.

\emph{Price curves} allow the cost of a good to be any (increasing) function of the quantity purchased. First, we show that an allocation can be supported by strictly increasing price curves if and only if it is \emph{group-domination-free}. A similar characterization holds for weakly increasing price curves. We use this to show that given any allocation, we can compute strictly (or weakly) increasing price curves that support it (or show that none exist) in polynomial time. These results use a variant of Farkas' Lemma along with a combinatorial argument to construct piecewise linear price curves. For our second main result, we use Lagrangian duality to show that in the bandwidth allocation setting, any allocation maximizing a CES welfare function can be supported by price curves. Taken together, our results show that nonlinear pricing opens up multiple possibilities beyond Nash welfare for market equilibria.
\end{abstract}

\section{Introduction}\label{sec:intro}

In a market, buyers and sellers exchange goods according to some sort of pricing system. One of the most important concepts in the study of markets is \emph{market equilibrium}, which describes when the supply provided by the sellers and the demands of the buyers exactly match. Market equilibrium theory dates back to Walras's seminal work in 1874~\cite{walras_elements_1874}. In 1954, Arrow and Debreu finally showed that under some mild assumptions, a market equilibrium is guaranteed to exist for a wide class of utility functions~\cite{arrow_existence_1954}. This includes Leontief utilities, which will be our focus.

The simplest mathematical model of a market is a \emph{Fisher market}, first proposed in 1891 by Irving Fisher (see \cite{brainard_how_2005} for a modern introduction). A Fisher market consists of a set of goods available for sale, and a set of agents, each with a fixed amount of money to spend. It is usually assumed that agents have no value for leftover money. In Fisher markets, each good $j$ has a single real-number price $p_j$, and the cost of buying some quantity $x$ of good $j$ is $p_j \cdot x$. We refer to such prices as \emph{linear}, meaning that the cost is proportional to the quantity purchased. A market equilibrium assigns a price to each good such that when each agent purchases her favorite bundle of goods that is affordable under these prices, the demand exactly matches the supply. 

There are three motivations behind this work. First, in real market economies, prices are often not linear, and depend on the quantity purchased\footnote{One consequence of this is that there can be an incentive for agents to ``team up", which is not the case in linear pricing. For example, it could be cheaper for one person to purchase the resource in bulk and then distribute it, as opposed to each person buying her own: imagine ordering pizza for a party. We do not consider strategic behavior in this paper; see Section~\ref{sec:prior} for additional discussion.}. We refer to prices of this form as \emph{price curves}. For example, ``buying in bulk" may allow agents to purchase twice as much of some resource for less than twice the price. In this case, the marginal price decreases as more of the good is purchased. On the other hand, for a scarce resource, a central authority may choose to impose increasing marginal costs to ensure that no single individual can monopolize the resource. Israel's pricing policy for water is a good example of this, where each additional unit of water costs more than the previous one~\cite{becker_2015_water}. A tremendous amount of work has been devoted to understanding the nature of linear prices, despite the pervasiveness of price curves in the real world. This paper attempts to ask the same fundamental questions of price curves that have been answered for linear prices.

Second, imagine a social planner or mechanism designer who wishes to design a pricing scheme that will maximize some objective function. The objective function of a social planner is typically referred to as \emph{welfare}. There are many different social welfare functions, the most well studied being utilitarian welfare (the sum of agent utilities), Nash welfare (the product of agent utilities)~\cite{nash_bargaining_1950, kaneko_nash_1979}, and max-min welfare (the minimum agent utility~\cite{rawls_1971_theory, sen_1976_welfare, sen_1977_social})\footnote{The utilitarian welfare is also known as the Benthamite welfare, after Jeremy Bentham. The max-min welfare is also known as the Rawlsian welfare, after John Rawls, or the egalitarian welfare.}. Max-min welfare can be seen as caring only about equality across individuals. The utilitarian welfare measures overall good across the entire population, possibly at the expense of certain individuals. The Nash welfare is something of a compromise between these two extremes.

Eisenberg and Gale famously showed that for linear prices and a large class of agent utilities (including Leontief), the market equilibria correspond exactly to the allocations maximizing Nash welfare~\cite{eisenberg_aggregation_1961, eisenberg_consensus_1959}. This result is powerful, but also limiting: what if the social planner wishes to maximize a different welfare function? Is it possible that using price curves instead of linear prices allows a wider set of allocations to be equilibria? 
In particular, are there welfare functions other than Nash welfare such that welfare-maximizing allocations can always be supported by price curves? (We say that an allocation can be \emph{supported} by price curves if there exist prices curves that make that allocation an equilibrium.)  Our paper answers these questions in the affirmative.

The third motivation involves a more conceptual connection between markets and welfare functions, both of which have been extensively studied in the economics literature. We know that linear-pricing equilibria correspond to maximizing Nash welfare, but does this connection go deeper? Our work hints at an affirmative answer to this question as well.

\subsection{Bandwidth allocation}\label{sec:bandwidth-intro}
Resource allocation with Leontief utilities generalizes the problem of network bandwidth allocation, which is a well-studied area in its own right (for example, the work of Kelly~\cite{kelly_1998_rate} on proportional fairness). In bandwidth allocation, each agent wishes to transmit data along a fixed
route of links, and thus desires bandwidth for exactly those links in
equal amounts. In our setting, each link corresponds to a good, and
the agent has unweighted Leontief utility over the set of goods
corresponding to her desired route.

In the bandwidth allocation setting, price curves correspond naturally to a signaling mechanism that provides congestion signals (eg. in the form of a packet mark or drop) and an end-point protocol such as TCP~\cite{cerf_1974_protocol} corresponds naturally to agent responses. It has been known that different marking schemes (such as RED and CHOKe~\cite{floyd_1993_random, pan_2000_choke}) and versions of TCP lead to different objective functions~\cite{padhye_1998_modeling}, with CES welfare (also known as ``$\alpha$-fairness") being one such objective~\cite{Bonald2001, Mo2000}. However, a market-based understanding was developed only for Nash Welfare, starting with the seminal work of Kelly et al.~\cite{kelly_1998_rate}.

\subsection{CES welfare functions}

For any constant $\rho \in (-\infty, 0) \cup (0,1]$, the \emph{constant elasticity of substitution} (CES) welfare function is given by
\[
\Big(\sum\limits_{i \in N} u_i^{\rho}\Big)^{1/\rho}
\]
where $u_i$ is agent $i$'s utility. Setting $\rho = 1$ yields utilitarian welfare, and the limits as $\rho \to -\infty$ and $\rho \to 0$ yield max-min welfare and Nash welfare, respectively. The smaller $\rho$ is, the more the social planner cares about individual equality (max-min welfare being the extreme case of this), and the larger $\rho$ is, the more the social planner cares about overall societal good (utilitarian welfare being the extreme case of this). For this reason, $\rho$ is called the \emph{inequality aversion} parameter. 

This class of welfare functions was first proposed by Atkinson~\cite{atkinson_1970_measurement} and further developed by~\cite{blackorby_1978_measures}. The CES welfare function (as opposed to the CES agent utility function) has received very little attention in the computational economics community, despite being extremely influential in the traditional economics literature with~\cite{atkinson_1970_measurement} having over 8000 citations.



These welfare functions also admit an axiomatic characterization:
\begin{enumerate}
\item Monotonicity: if one agent's utility increases while all others are unchanged, the welfare function should prefer the new allocation.
\item Symmetry: the welfare function should treat all agents the same.
\item Continuity: the welfare function should be continuous.
\item Independence of common scale: scaling all agent utilities by the same factor should not affect which allocations have better welfare than others.
\item Independence of unconcerned agents: when comparing the welfare of two allocations, the comparison should not depend on agents who have the same utility in both allocations. 
\item The Pigou-Dalton principle: all things being equal, the welfare function should prefer more equitable allocations~\cite{dalton_1920_measurement,pigou_1912_wealth}. 
\end{enumerate}

Up to monotonic transformations of the welfare function (which of course do not affect which allocations have better welfare than others), the set of welfare functions that satisfy these axioms is exactly the set of CES welfare functions with $\rho \in (-\infty, 0)\cup(0,1]$,
including Nash welfare~\cite{moulin_2003_fair}\footnote{This actually does not include max-min welfare, which obeys weak monotonicity but not strict monotonicity.}. This axiomatic characterization shows that we are not just focusing on an arbitrary class of welfare functions: CES welfare functions are arguably the most reasonable welfare functions.

\section{Results and prior work}\label{sec:results}

We assume throughout the paper that agents have Leontief utility functions. An agent with a Leontief utility function desires the goods in fixed proportions, e.g., one unit of CPU for every two units of RAM. We can express agent $i$'s utility as
\[
\min_{j \in : w_{ij} \ne 0} \frac{x_{ij}}{w_{ij}}
\]
where $M$ is the set of goods, $x_{ij}$ is the amount of good $j$ agent $i$ receives, and $w_{ij}$ is agent $i$'s (nonnegative) weight for good $j$. The setting where $w_{ij} \in \{0,1\}$ for all $i$ and $j$ is also known as bandwidth allocation.

Leontief utilities exhibit certain convenient properties that other utility functions do not. In particular, such an agent will always purchase her goods exactly in the same proportions, and all that changes is how much she purchases. We also assume that each agent has the same amount of money to spend. However, most of our results do extend to the case of unequal budgets, as noted later on.


\subsection{Results}

\subsubsection{A necessary and sufficient condition for the existence of price curves.} Section~\ref{sec:gdf} presents our first main result, which concerns the first motivation described above: trying to understand fundamental properties of price curve equilibria. In particular, this section answers the following question: given some allocation, is there a way to tell whether there exist price curves that make this allocation an equilibrium? Furthermore, can such price curves be efficiently computed? 

The answer boils down to a property we call \emph{group-domination-freeness}. Roughly, a set of agents $\A$ group-dominates a set of agents $\B$ if these sets are the same size, but for every good $j$ and every threshold $\tau\in\bbrpos$, the number of agents in $\A$ receiving at least of $\tau$ of good $j$ is at least as large as the number of agents in $\B$ receiving at least $\tau$ of good $j$. The formal definition of group domination is given in Section~\ref{sec:gdf}. An allocation is group-domination-free (GDF) if no group dominates any other group. We show that an allocation can be supported by \emph{strictly} increasing price curves if and only if the allocation is GDF (Theorem~\ref{thm:gdf-final})\footnote{This result extends to the setting of unequal budgets if one instead considers ``budget-weighted group-domination-freeness". We elaborate on this in Section~\ref{sec:gdf}.}. This characterization
results in a polynomial time algorithm to compute the underlying price
curves or show that none exist (Theorem~\ref{thm:gdf-lp}). Appendix~\ref{sec:laf} gives an analogous characterization
theorem and polynomial time algorithm for weakly increasing price
curves (Theorems~\ref{thm:laf} and \ref{thm:laf-lp}).

Although the definition of group domination may seem slightly technical, we also demonstrate its relation to the concept of stochastic dominance, and argue that it may in fact be interpreted as a fairness notion. The stochastic dominance interpretation will also suggest that group domination is quite a strong property, and hence group-domination-freeness is a quite a weak assumption.

The proof of these characterization theorems involves the construction of a special matrix we call the \emph{agent-order matrix} $A$, which is a function of the allocation. We show that existence of strictly increasing price curves is captured by \emph{strongly} positive solutions (every entry of the solution vector is positive) to $A \mathbf{y} = \mathbf{0}$. We relate group-domination-freeness to a property of this matrix, and then invoke a duality theorem equivalent to Farkas' Lemma~\cite{perng_2017} to complete the proof. The algorithm for computing price curves is a linear program involving the agent-order matrix.

\subsubsection{Maximum CES welfare allocations can be supported in bandwidth allocation.} Our second main result concerns the second and third motivations: a social planner who wishes to maximize a welfare function other than Nash welfare, and understanding the connection between markets and welfare functions. We know that the maximum Nash welfare allocations can be supported by linear prices. If we allow price curves, are there other welfare functions whose maxima can be supported?

First, we will need some assumption on the agents' weights (recall that $w_{ij}$ denotes agent $i$' weight for good $j$). To see this, consider just two agents and one good. Since the agents have the same budget, they must receive equal amounts of the good no matter the price curve. However, if one agent derives less utility per unit of the good, this allocation doesn't maximize any CES welfare function except for Nash welfare\footnote{This example actually holds for a much wider class of utilities, not just Leontief. This is because for a single good, all anyone can do is buy as much of that good as they can.}. One natural way to handle this is to assume that the agents' weight vectors are normalized in some manner. The bandwidth allocation setting -- $w_{ij} \in \{0,1\}$ for all $i$ and $j$ -- constitutes one such possibility (refer to Section~\ref{sec:bandwidth-intro} for additional discussion of this setting).

Our second main result is that in the bandwidth allocation setting, the welfare-maximizing allocations for any fixed CES welfare function with $\rho \in (-\infty, 0)\cup(0, 1)$ can be supported by price curves (Theorem~\ref{thm:bandwidth}). We prove this by writing a convex program to maximize CES welfare, and using duality to construct explicit price curves. Furthermore, these price curves take on a natural form: the cost of buying $x$ of good $j$ is $q_j x^{1-\rho}$, where $q_j \ge 0$ is a constant derived from the dual\footnote{These results extend to agents with unequal budgets if we instead consider the ``budget-weighted CES welfare", i.e., $\big(\sum_{i \in N} B_i u_i^\rho\big)^{1/\rho}$, where $B_i$ is agent $i$'s budget. We discuss this in Section~\ref{sec:ces-unequal-budgets}. The price curves will take the exact same form.}. This result can be thought of as extending the work on price-based
congestion control (pioneered by Kelly et al.~\cite{kelly_1998_rate}) beyond Nash welfare to almost all CES welfare functions.

We also prove a converse of sorts: if an allocation $\x$ can be supported by price curves of the form $q_j x^{1-\rho}$, and the supply is exhausted for every good with nonzero price (i.e., $q_j \ne 0$), then $\x$ is a maximum CES welfare allocation (Theorem~\ref{thm:ces-converse}). This is analogous to the famous result of Eisenberg and Gale: the linear-pricing equilibrium allocations are exactly the allocations maximizing Nash welfare~\cite{eisenberg_aggregation_1961, eisenberg_consensus_1959}.

One may wonder if Theorem~\ref{thm:bandwidth} could be extended to $\rho = 1$, i.e., maximizing the sum of utilities. Example~\ref{ex:util} shows that the answer is no, unfortunately. One may also wonder if Theorem~\ref{thm:bandwidth} would generalize if we relax our constraint from $w_{ij} \in \{0,1\}$ to $w_{ij} \in [0,1]$. The answer is again no; this counterexample is more involved and is given by Theorem~\ref{thm:L-infty-counter} in Section~\ref{sec:counterexamples}.

\begin{example}[tb]
\centering
\begin{tabular}{ c|ccc} 
& agent 1 & agent 2 & agent 3\\
\hline
 good 1 & 1 & 0 & 1  \\
good 2 & 0 & 1 & 1
\end{tabular}
\caption{A bandwidth allocation instance where no maximum utilitarian welfare allocation can be supported. The table above gives each agent's weight $w_{ij} \in \{0,1\}$ for each good. Utilitarian welfare is maximized by giving all of good 1 to agent 1 and all of good 2 to agent 2, leaving agent 3 with nothing. This is impossible to support with price curves, because agent 3 can always buy a nonzero amount of the goods she wants.}
\label{ex:util}
\end{example}

\subsubsection{Information required by the social planner.} In general, the price curves will depend on agents' preferences, and so the social planner needs to know agents' preferences in order to compute them. This is true of linear-pricing markets as well: the equilibrium prices depend on the utility functions of the agents. For our GDF characterization result, the price curves can have a very complex shape that depends intricately on the specific preferences, unlike linear prices. For this reason, we view the GDF characterization more as a conceptual contribution than an actual mechanism. In contrast, for our CES welfare bandwidth allocation result, the price curves have a very simple shape that is independent of the agents' utility functions (the price of buying $x$ will be $q_j x^{1-\rho}$, where $q_j$ is a Lagrange multiplier corresponding to good $j$). This structure suggests a simple decentralized primal-dual algorithm similar to the work of Kelly et al.~\cite{kelly_1998_rate}, where on each step, every agent updates her (primal) allocation based on the current prices on the links she cares about, and each link updates its (dual) price based on the total flow through that link. We discuss this further in Section~\ref{sec:decentralized}.

\subsubsection{Additional results.} We prove two additional results. First, we consider max-min welfare in Appendix~\ref{sec:maxmin}, and show that as long as agents' weights are reasonably normalized, allocations with optimal max-min welfare can be supported. Second, in Appendix~\ref{sec:laf}, we give a characterization theorem for weakly increasing price curves (Theorem~\ref{thm:laf}). Just like strictly increasing price curves, we can compute weakly increasing price curves (or show that none exist) in polynomial time, using a linear program (Theorem~\ref{thm:laf-lp}).

\subsection{Prior work}\label{sec:prior}

The study of markets has a long history
in the economics literature~\cite{walras_elements_1874, varian_equity_1974,arrow_existence_1954, brainard_how_2005}. Recently, this topic has received significant attention in the computer science community as well (see~\cite{vazirani_2007_combinatorial} for an algorithmic exposition). The vast majority of the literature has focused on linear prices. Perhaps the most relevant classical result is the Second Welfare Theorem, which states that for any Pareto optimal allocation, there exists a (possibly unequal) redistribution of initial wealth which makes that allocation a (linear-pricing) market equilibrium. In a similar vein, \cite{foley_lindahls_1970} showed that for linear but fully personalized prices (i.e., we can independently assign different prices to different agents for the same good), one can support any Pareto optimal allocation.

The important question, then, is what price curve equilibria offer that these prior results do not. First, in many of societies, a centrally mandated redistribution of wealth is out of the question. Similarly, fully personalized prices mean that we lose any claim to fairness, since agents may be subjected to totally different prices for the same resource. In contrast, price curves do not require a redistribution of wealth, and furthermore are \emph{anonymous}, meaning that all agents are subject to the same pricing scheme. These properties suggest that price curves are more practical, and indeed price curves do appear in practice (as noted previously, see Israel's pricing policy for water~\cite{becker_2015_water}).

The concept of a social welfare function -- a function which encapsulates a societal value system -- was first proposed in 1938~\cite{bergson_1938_reformulation}, and further developed by~\cite{samuelson_1947_foundations}. For brevity, we will just refer to them as welfare functions. Since then, several different welfare functions have been proposed, the most well-studied being utilitarian welfare, Nash welfare~\cite{nash_bargaining_1950, kaneko_nash_1979}, and max-min welfare (the minimum agent utility)~\cite{rawls_1971_theory, sen_1976_welfare, sen_1977_social}. The class of CES welfare functions was first proposed by~\cite{atkinson_1970_measurement} and further developed by \cite{blackorby_1978_measures}, although not under the same name. See \cite{moulin_2003_fair} for a modern discussion of welfare functions.

We briefly mention an important property in mechanism design: \emph{strategy-proofness}. A mechanism is strategy-proof if agents can never improve their utility by lying about their preferences. Unfortunately, even in simple settings, the only mechanism for resource allocation that can simultaneously guarantee strategy-proofness and Pareto optimality is \emph{dictatorial}, meaning that one agent receives all of the resources~\cite{schummer_strategy_1996}. This is clearly unacceptable, so we sacrifice strategy-proofness in favor of Pareto optimality. Specifically, we assume throughout the paper that agents always truthfully report their preferences.

The remainder of the paper is organized as follows. Section~\ref{sec:model} formally defines the model. Section~\ref{sec:gdf} presents our first main result: that for strictly increasing price curves, an allocation can be supported if and only if it is group-domination-free. Section~\ref{sec:ces-duality} gives our second main result: that in the bandwidth allocation setting, every maximum CES welfare allocation can be supported by price curves. At this point we conclude the main paper, and move on to supplementary results. Appendix~\ref{sec:maxmin} shows that allocations with optimal max-min welfare can be supported by price curves in a wide range of settings. In Appendix~\ref{sec:laf}, we generalize our characterization theorem from Section~\ref{sec:gdf} to account for weakly increasing price curves. Appendix~\ref{sec:counterexamples} provides counterexamples to various claims that one might have hoped to prove. We also discuss in that section why certain other classes of utilities -- in particular, linear utilities -- are much more difficult to analyze. Finally, Appendix~\ref{sec:extra-proofs} provides some proofs that are omitted from earlier sections.

\section{Model}\label{sec:model}

Let $N = \{1,2,\ldots ,n\}$ be the set of agents, and let $M$ be a set of divisible goods, where $|M| = m$. Throughout the paper, we use $i$ and $k$ to refer to agents and $j$ and $\ell$ to refer to goods. Let $s_j$ be the available supply of good $j$.\footnote{It is common to normalize the supply of each good to 1, but this is not possible when weights are restricted to be either 0 or 1.} The social planner needs to determine an \emph{allocation} $\x \in \bbr_{\ge 0}^{n\times m}$, where $x_i \in \bbr_{\ge 0}^m$ is the \emph{bundle} of agent $i$, and $x_{ij} \in [0,s_j]$ is the fraction of good $j$ allocated to agent $i$. An allocation cannot allocate more than the available supply: $\x$ is a valid allocation if and only if $\sum_i x_{ij} \leq s_j$ for all $j$.

Agent $i$'s utility for a bundle $x_i$ is denoted by $u_i(x_i)  \in \bbr_{\ge 0}$. The literature studies many subclasses of utility functions. For example, \emph{Leontief} utilities have the form
\[
u_i(x_i) = \min_{j \in M: w_{ij} \ne 0}\   \frac{x_{ij}}{w_{ij}}
\]
where $w_{ij}$ represents the weight agent $i$ has for good $j$. For brevity, we will usually write $u_i(x_i) = \min_{j \in M}  \frac{x_{ij}}{w_{ij}}$ and leave the $w_{ij} \ne 0$ condition implicit. The same holds for other contexts where we are dividing by a value $w_{ij}$ that may be zero. We assume that agents have Leontief utilities throughout the paper, and we assume that each agent has nonzero weight for at least one good.

Just as agents have utilities over the bundles they receive, we can imagine a social planner who wishes to design a pricing mechanism to maximize some societal welfare function $\Phi(\x)$. One can think of $\Phi$ as the social planner's utility function, which takes as input the agent utilities, instead of a bundle of goods. The most well-studied welfare functions are the \emph{max-min welfare} $\Phi(\x) = \min_{i \in N} u_i(x_i)$, the \emph{Nash welfare} $\Phi(\x) = \big(\prod_{i \in N} u_i(x_i) \big)^{1/n}$, and the \emph{utilitarian welfare} $\Phi(\x) = \sum_{i \in N} u_i(x_i)$.\footnote{One could also imagine the (arguably less natural) case of a social planner who cares about some agents' utilities more than others, which would manifest as weights in her utility function. We briefly consider this case in the setting of CES welfare with unequal budgets (Section~\ref{sec:ces-unequal-budgets}).} These three welfare functions can be generalized by a CES welfare function:
\[
\Phi(\x) = \Big(\sum\limits_{i \in N} u_i(x_i)^{\rho}\Big)^{1/\rho}
\]
where $\rho$ is a constant in $(-\infty, 0) \cup (0, 1]$.

\subsection{Fisher markets}\label{sec:fisher-markets}

The simplest market model is a \emph{Fisher market}~\cite{brainard_how_2005}. In a Fisher market, each good is available for sale and each agent enters the market with a fixed budget she wishes to spend. It is assumed that agents have no value for leftover money; this will imply that each agent always spends her entire budget. Unless otherwise stated, we will assume that all agents have the same budget, and normalize all budgets to 1 without loss of generality.

Informally, a \emph{Fisher market equilibrium} assigns a price $p_j \in \bbr_{\geq 0}$ to each good $j$ so that the agents' demand equals supply. Formally, for a price vector $\p \in \bbr^m_{\geq 0}$, the \emph{cost} of bundle $x_i$ is $C_{\p}(x_i) = \sum_{j\in M} x_{ij}p_j$. Bundle $x_i$ is \emph{affordable} for agent $i$ if $C_{\p}(x_i) \leq 1$. Agent $i$'s \emph{demand set} is the set of her favorite affordable bundles, i.e., 
\[
D_i(\p) = \argmax\limits_{x_i \in \bbr^m_{\geq 0}:\ C_\p(x_i) \leq 1} u_i(x_i).
\]

If $p_j > 0$ for all $j \in M$, an agent with Leontief utility will always purchase in exact proportion to her weights: since agent $i$'s utility is determined by $\min_{j \in M} \frac{x_{ij}}{w_{ij}}$, violating these proportions cannot increase her utility. Thus when discussing an arbitrary allocation $\x$, we assume that each bundle $x_i$ is in proportion to agent $i$'s weights: otherwise there is no hope of supporting such an allocation. For brevity, we leave this assumption implicit through the paper, rather than always stating ``for an arbitrary allocation $\x$ where each bundle is in proportion to agent $i$'s weights".

The careful reader may note that we are glossing over a detail: if $p_j = 0$ for some good $j$, agent $i$ can add more of good $j$ to her bundle at no additional cost. This does not affect agent utilities at all, but is technically possible. In order to avoid handling this uninteresting and sometimes messy edge case, we assume throughout the paper that for agents with Leontief utilities, demand sets and arbitrary allocations are always in exact proportion to agent weights.

Formally, a Fisher market equilibrium $(\x, \p)$ is an allocation $\x$ and price vector $\p \in \bbr^m_{\geq 0}$ such that
\begin{enumerate}
\item Each agent receives a bundle in her demand set: $x_i \in D_i(\p)$.
\item The market clears: for all $j \in M$, $\sum_{i \in N} x_{ij} \leq s_j$, and if $p_j > 0$, then $\sum_{i \in N} x_{ij} = s_j$. 
\end{enumerate}
When all agents have the same budget, this is also called the \emph{competitive equilibrium from equal incomes}~\cite{varian_equity_1974}. 


For a wide class of agent utilities, including Leontief utilities, an equilibrium is guaranteed to exist~\cite{arrow_existence_1954}\footnote{Specifically, an equilibrium is guaranteed to exist as long agent utilities are continuous, quasi-concave, and non-satiated. The full Arrow-Debreu model also allows for agents to enter to market with goods themselves and not only money; the necessary conditions on utilities are slightly more complex in that setting.}. Furthermore, the equilibrium allocations are the exactly the allocations which maximize Nash welfare\footnote{The conditions for the correspondence between Fisher market equilibria and Nash welfare are slightly stricter than those for market equilibrium existence, but are still quite general. Sufficient criteria were given in \cite{eisenberg_aggregation_1961} and generalized slightly by \cite{jain_eisenberggale_2010}.}. This is made explicit by the celebrated Eisenberg-Gale convex program~\cite{eisenberg_aggregation_1961, eisenberg_consensus_1959},
and combinatorial approaches to computing market
equilibria~\cite{devanur_2008_market,jain_eisenberggale_2010}.

\subsection{Price curves}\label{sec:price-curves}

Our paper considers an expanded model where instead of assigning a single price $p_j \in \bbr_{\geq 0}$ to each good, we assign each good a \emph{price curve} $f_j: \bbr_{\geq 0} \to \bbr_{\geq 0}$. The function $f_j$ expresses the cost of good $j$ as a function of the quantity purchased. When we say ``price curve", we mean a function $f_j$ that is weakly increasing (buying more of a good cannot cost less), normalized ($f_j(0) = 0$), and continuous. Setting $f_j(x) = p_j \cdot x$ for all $j \in M$ and all $x \in \bbr_{\geq 0}$ yields the Fisher market setting.

Given a vector of price curves $\f = (f_1,\dots,f_m)$, the cost of a bundle $x_i$ is now $C_\f(x_i) = \sum_{j \in M} f_j(x_{ij})$. Although the functions $f_j$ may not be linear, the cost of a bundle is still additive across goods. Each agent's demand set is defined identically to the Fisher market setting: $D_i(\f) = \argmax\limits_{x_i \in \bbr^m_{\geq 0}:\ C_\f(x_i) \leq 1} u_i(x_i)$.

The demand set is intuitively the same as in the Fisher market setting: each agent purchases exactly in proportion to her weights, and buys as much as she can without exceeding her budget. A \emph{price curve equilibrium} $(\x, \f)$ is an allocation $\x$ and vector of price curves $\f$ such that
\begin{enumerate}
\item Each agent receives a bundle in her demand set: $x_i \in D_i(\f)$.
\item The demand does not exceed supply: $\sum_{i \in N} x_{ij} \leq s_j$ for all $j \in M$\footnote{For Fisher market equilibria, the second condition also stipulated that whenever $p_j > 0$, $\sum_{j \in M} x_{ij} = s_j$. Without this additional condition, cranking up all prices to infinity would result in trivial equilibria where all agents purchase almost nothing and so would certainly not maximize Nash welfare. Such trivial price curve equilibria do exist under our definition, but since we are not going to make any claims of the form ``all price curve equilibria maximize a certain function", there is no issue with allowing these trivial equilibria to exist.}.
\end{enumerate}

We say that price curves $\f$ \emph{support} an allocation $\x$ if $(\x, \f)$ is a price curve equilibrium. The fundamental question we address in this paper is: what allocations $\x$ can be supported?

\section{Group domination}\label{sec:gdf}
Recall that we require price curves to be continuous and weakly increasing. We wish to theoretically characterize which allocations can be supported by price curves so that we can (1) apply this characterization in our subsequent proofs, and (2) construct an algorithm which can calculate price curves in polynomial time.

The true necessary and sufficient condition for an allocation to be supported by price curves -- and an algorithm to compute them -- is given in Appendix~\ref{sec:laf}. However, this condition (``locked-agent-freeness'') is somewhat unwieldy. Although weakly increasing price curves are sometimes necessary\footnote{Consider an instance with two agents and two goods, each with supply 1. Let the agents' weights be given by $w_{11} = w_{21} = w_{12} = 1$ and $w_{22} = 0$. Nash welfare is maximized by splitting good 1 evenly between the two agents, and allowing agent 1 to purchase an equal quantity of good 2. This only possible if the price of good 2 is zero: otherwise, agent 1 is paying more than agent 2. Recall that the Fisher market equilibrium prices are the dual variables of the convex program for maximizing Nash welfare: thus the price of good 2 being zero corresponds to the fact that the supply constraint for good 2 is not tight in this instance.}, for now we will consider only \emph{strictly} increasing price curves. The corresponding necessary and sufficient condition is the cleaner notion of group-domination-freeness.


\subsection{Group domination}
We have termed the necessary and sufficient condition for the existence of strictly increasing price curves ``group-domination-freeness'' (GDF). To gain intuition for this condition, consider an allocation $\x$ and agents $i,k$. We will say that agent $i$ \emph{dominates} agent $k$ if $\forall j \ x_{ij} \geq x_{kj}$ and there exists $j$ for which this inequality is strict. Observe that this would prevent the existence of strictly increasing price curves supporting allocation $\x$ -- both agents must spend their full budget (otherwise they could buy more of every good, since price curves are continuous), but agent $i$ buys strictly more than agent $k$. A similar scenario arises when considering any two weighted sets of agents $\A,\B$. We can represent these weighted groups as vectors with a non-negative weight\footnote{Note that this is \emph{not} the same weight as the $w_{ij}$ representing an agent's weight for a good.} for each agent, where we require that $\A$ and $\B$ have the same total agent-weight. If for every possible quantity $\tau \in \bbrpos$ of any good $j$, considering only agents purchasing at least $\tau$ of good $j$, the weight of the agents in $\A$ is greater than or equal to the weight of agents in $\B$, then $\B$ can never be made to pay more than $\A$. Essentially, for each additional $\delta$ of any good, as many agents from $\A$ must purchase that $\delta$ as agents from $\B$, so no matter how we price these increments, $\B$ never pays more. If this difference in weights is strict for any ($j,\tau)$ pair, that implies some $\delta$ increment must cost 0 (because the total expenditure of $\A$ and $\B$ must be equal), violating the requirement that price curves be strictly increasing.

Another way to gain intuition for group domination is by analogy to stochastic dominance. Distribution $\A$ stochastically dominates distribution $\B$ if for every possible payoff $\tau \in \bbrpos$, the odds of getting at least $\tau$ from $\A$ are at least as high as the odds of getting at least $\tau$ from $\B$. One consequence of stochastic dominance is that \emph{any} rational agent should prefer $\A$ to $\B$ -- there are no trade-offs, $\A$ is simply better than (dominant over) $\B$. In fact, we can directly consider weighted groups of agents as probability distributions. The total weight of each group must be the same -- without loss of generality, equal to 1. Consider sampling the allocations $x_{ij}$ for any good $j$ with probability equal to the weight of each agent. The probability distribution $\A$ stochastically dominating $\B$ is exactly equivalent to the weighted group $\A$ group-dominating $\B$. Thus not only does group domination create problems for pricing, it can arguably be considered \emph{unfair}, as $\A$ is in some sense \emph{objectively} better-off than $\B$\footnote{See \cite{feller_1968_introduction} for an introduction to stochastic dominance.}.

The formal definition of this condition is below.

\begin{definition}[\textbf{Group-domination-free (GDF)}]\label{def:gd}
Let $\A = (a_1,\ldots, a_n)$ and $\B = (b_1,\ldots, b_n)$ be vectors in $\bbr_{\geq 0}^n$ that assign a (possibly zero) weight to each agent, such that $\sum_{i \in N} a_i = \sum_{i \in N} b_i$. Then $\A$ group-dominates $\B$ in an allocation $\x$ (denoted $\A \succ \B$) if for all $j \in M$ and for any threshold $\tau \in \bbrpos$,
\[
\sum_{i \in N:\ x_{ij} \geq \tau} a_i \geq \sum_{i\in N:\ x_{ij} \ge \tau} b_i
\]
and there exists a ($j,\tau$) pair where the inequality is strict. $\x$ is group-domination-free if there do not exist vectors $\A,\B\in\bbrpos^n$ such that $\A\succ\B$ in $\x$.\footnote{The ``-free/-freeness" suffix may remind some readers of the popular fairness notion envy-freeness; this connection is intended. If one agent does envy another, this constitutes an instance of group domination in the allocation, so GDF implies envy-freeness. However, the reverse is not true: for an agent $i$ to envy agent $k$, $k$ must receive strictly more of every good $i$ cares about; for group domination, the difference need only be strict on one good. All market equilibria are envy-free; GDF is a stronger notion corresponding exactly the the existence of a market equilibrium in this setting.}

\end{definition}

We will also assume without loss of generality that for all $i \in N$, at least one of $a_i$ and $b_i$ is zero, i.e., these are non-overlapping weighted groups: were this not the case, we could define $\aprime$ and $\bprime$ by $a_i' = a_i - \min(a_i, b_i)$ and $b_i' = b_i - \min(a_i, b_i)$, and we would have $\aprime \succ \bprime$ if and only if $\A \succ \B$.

Theorem~\ref{thm:gdf-final} will show that an allocation can be supported with strictly increasing price curves if and only if it is GDF. 

\subsection{Characterization of allocations supported by strictly increasing price curves}\label{sec:characterization-strict}
In order to relate the existence of price curves to GDF, first observe that, for agents with Leontief utilities, the conditions for a price curve equilibrium take on a relatively simple form. Recall that by assumption, the allocation to be considered doesn't violate supply, and each agent purchases goods in exact proportion to her weights $w_{ij}$ (see Section~\ref{sec:fisher-markets}). Then the condition that $x_i \in D_i(\f)$ for all $i$ can be captured by Lemma~\ref{lem:leontief-demand}, whose proof appears in Appendix~\ref{sec:extra-proofs}. Intuitively, agent $i$ fills up her bundle in proportion to her weights until (a) she reaches her budget and (b) there exists a good where buying more would cost more.

\begin{restatable}{lemma}{lemLeontiefDemand}
\label{lem:leontief-demand}
Given price curves $\f$, $x_i \in D_i(\f)$ if and only if both of the following hold:  (a) $C_\f(x_i) = 1$, and (b) there exists $j \in M$ such that for all $\ep > 0$, $f_j(x_{ij} + \ep w_{ij}) > f_j(x_{ij})$.
\end{restatable}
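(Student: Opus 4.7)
The plan is to recast the whole problem via the single-parameter ``scaling'' of agent $i$'s weight vector, and then use continuity plus monotonicity of $f_j$ to check both directions.

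\textbf{Reformulation.} Since by assumption every bundle under consideration is in exact proportion to agent $i$'s weights, I would write $x_i = \lambda w_i$ for some $\lambda \geq 0$, so that $u_i(x_i) = \lambda$, and define
\[
g_i(\lambda) \;=\; C_\f(\lambda w_i) \;=\; \sum_{j \in M} f_j(\lambda w_{ij}).
\]
Because each $f_j$ is continuous, weakly increasing, and satisfies $f_j(0)=0$, the function $g_i$ is continuous, weakly increasing, and satisfies $g_i(0)=0$. Under the proportionality convention, $x_i\in D_i(\f)$ is equivalent to saying that $\lambda$ is the largest scalar with $g_i(\lambda)\le 1$.

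\textbf{Necessity ((a) and (b) from demand).} Suppose $x_i=\lambda w_i\in D_i(\f)$. If (a) failed, i.e.\ $g_i(\lambda)<1$, then continuity of $g_i$ would yield some $\ep>0$ with $g_i(\lambda+\ep)\le 1$, so $(\lambda+\ep)w_i$ would be affordable and give strictly higher utility — contradicting optimality. If (b) failed, then for every $j\in M$ there would exist $\ep_j>0$ with $f_j(x_{ij}+\ep_j w_{ij})=f_j(x_{ij})$; taking $\ep^*=\min_j \ep_j>0$ (finite minimum over finitely many goods) and using that each $f_j$ is weakly increasing and flat on $[x_{ij},x_{ij}+\ep^* w_{ij}]$, we get $g_i(\lambda+\ep^*)=g_i(\lambda)\le 1$, again contradicting the maximality of $\lambda$.

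\textbf{Sufficiency ((a) and (b) imply demand).} Let $j^*$ be the witness for (b); note $w_{i,j^*}\ne 0$, otherwise (b) would fail trivially since $f_{j^*}(x_{i,j^*}+\ep\cdot 0)=f_{j^*}(x_{i,j^*})$ for all $\ep$. Suppose for contradiction that some bundle $x_i'$ has $C_\f(x_i')\le 1$ and $u_i(x_i')>\lambda$. By the definition of Leontief utility, $x'_{ij}>\lambda w_{ij}=x_{ij}$ for every $j$ with $w_{ij}\ne 0$; in particular $x'_{i,j^*}=x_{i,j^*}+\ep' w_{i,j^*}$ for some $\ep'>0$, so by (b), $f_{j^*}(x'_{i,j^*})>f_{j^*}(x_{i,j^*})$. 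For every other $j$ with $w_{ij}\ne 0$, weak monotonicity gives $f_j(x'_{ij})\ge f_j(x_{ij})$, and for $j$ with $w_{ij}=0$ we have $x_{ij}=0$ and $f_j(0)=0$, so $f_j(x'_{ij})\ge 0=f_j(x_{ij})$. Summing yields $C_\f(x_i')>C_\f(x_i)=1$, contradicting affordability.

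\textbf{Main obstacle.} The only subtlety is the convention that demand bundles are taken in exact proportion to weights: this is what forces us to use the ``binding good'' condition (b), since otherwise an agent could always pad an over-bought good costlessly and the existence of a unique scaled demand would fail. Once this convention is invoked and $g_i$ is introduced, both directions reduce to the elementary facts that a continuous nondecreasing function attains its $\le 1$ sublevel set closure at a well-defined maximum, and that adding a strictly positive increment in at least one coordinate of a weakly increasing cost strictly increases the total.
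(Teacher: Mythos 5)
Your proof is correct and follows essentially the same route as the paper's: continuity of the cost to rule out underspending, the ``flat spot'' padding argument when (b) fails, and the strict increase at the witness good $j^*$ to block any strictly better affordable bundle. The reparametrization via $g_i(\lambda)=C_\f(\lambda w_i)$ and taking $\ep^*=\min_j \ep_j$ to keep the padded bundle proportional are cosmetic refinements of the same argument.
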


We are now almost ready to prove Theorem~\ref{thm:gdf-final} relating the existence of price curves to GDF. However, the proof is rather intricate, so we begin by giving an intuitive overview thereof. Throughout, we will refer to the example allocation $\x$ shown in Figure~\ref{fig:allocx} to make the argument concrete. (Note that the example allocation shown in the figure implicitly defines a corresponding Leontief utility function for each agent, up to scaling by a constant, since we assume each agent fills up her bundle in exact proportion to her weights $w_{ij}$.)

We will now use this example to illustrate three key observations regarding the existence of strictly increasing price curves supporting an allocation $\x$: (1) Only the points on the price curves corresponding to agent allocations matter. (2) Only the order of the agents along the price curve for each good, not their absolute allocations, matters. (3) The order of the agents can be captured in an \emph{agent-order matrix} such that weighted column and row sums represent agent costs and group dominations, respectively.

First we address observation (1). Consider the possible price curves shown in Figure~\ref{fig:graphx}. Given the price that each agent pays for each good, these are the only points that matter, in the sense that (a) each agent's total cost, which must equal 1, depends only on these points, and (b) an agent must be able to purchase more of a good if the next fixed point on that curve has the same price, and otherwise need not be able to do so, for instance if we make the price curves piece-wise linear as shown. Thus when considering whether price curves are possible, we need only consider the set of prices corresponding to agent allocations.

A similar argument addresses observation (2). As long as we fix the order of points along a price curve, we can change the allocations arbitrarily (assuming they still obey the supply and proportional-purchase assumptions) without changing the prices. Obviously, every agent will still incur a cost of 1, and it will not change whether an agent can purchase more of a good (whether the next point along the curve has the same price).

\begin{figure}
\centering
\begin{subfigure}[b]{.39\textwidth}
\centering
\begin{tabular}{c|ccc}
& good 1 & good 2 & good 3 \\
\hline
agent 1 & 0.6 & 0 & 0.2 \\
agent 2 & 0.3 & 0 & 0 \\
agent 3 & 0.1 & 1 & 0.4 \\
agent 4 & 0 & 0 & 0.4 \\
\end{tabular}
\caption{example allocation $\x$}\label{fig:allocx}
\end{subfigure}
\hfill
\begin{subfigure}[b]{.6\textwidth}
\centering
{\setlength{\tabcolsep}{7pt}
\begin{tabular}{ccccc:c:cccc:ccc}
& \multicolumn{3}{c}{good 1} & \multicolumn{1}{c}{} & \multicolumn{1}{c}{good 2} & \multicolumn{1}{c}{} & \multicolumn{2}{c}{good 3} & \multicolumn{1}{c}{} && \\ [1ex]
\ldelim[{4}{15pt}\hspace{-7pt} & 1 & 1 & 1 && 0 && 1 & 0 && -1 & \hspace{-9pt}\rdelim]{4}{15pt} \\
& 1 & 1 & 0 && 0 && 0 & 0 && -1 & \\
& 1 & 0 & 0 && 1 && 1 & 1 && -1 & \\
& 0 & 0 & 0 && 0 && 1 & 1 && -1 & \\
\end{tabular}
}
\caption{$\x$ represented as an agent-order matrix $A$}\label{fig:matrixx}
\end{subfigure}
\vspace{10pt}
\begin{subfigure}[c]{\textwidth}
\centering
\begin{tikzpicture}[y=.125cm, x=.39cm]
\begin{scope}
	\draw (0,0) -- coordinate (x axis mid) (10,0);
    \draw (0,0) -- coordinate (y axis mid) (0,25);
    \foreach \x in {0.2,0.4,0.6,0.8,1}
    	\draw (\x*10,2pt) -- (\x*10,-2pt)
        node[anchor=north] {\x};
 
	\node[below=0.5cm] at (x axis mid) {allocation of good 1};
	\node[rotate=90,above=0.1cm] at (y axis mid) {price};

	\filldraw (0,0) circle (0.06cm) node[above right]{\hspace{0pt} agent 4};
	\filldraw (1,7) circle (0.06cm) node[right]{agent 3};
	\filldraw (3,15) circle (0.06cm) node[below right]{agent 2};
	\filldraw (6,20) circle (0.06cm) node[below right]{agent 1};
    \draw[dashed] (0,0) -- (1,7) -- (3,15) -- (6,20) -- (6.33,25);
\end{scope}
\begin{scope}[shift={(11.75,0)}]
	\draw (0,0) -- coordinate (x axis mid) (10,0);
    \draw (0,0) -- coordinate (y axis mid) (0,25);
    \foreach \x in {0.2,0.4,0.6,0.8,1}
    	\draw (\x*10,2pt) -- (\x*10,-2pt)
        node[anchor=north] {\x};
 
	\node[below=0.5cm] at (x axis mid) {allocation of good 2};

	\filldraw (0,0) circle (0.06cm) node[above right]{\hspace{17pt} agents 1,2,4};
	\filldraw (10,18) circle (0.06cm) node[above left]{agent 3};
    \draw[dashed] (0,0) -- (10,18);
\end{scope}
\begin{scope}[shift={(23.5,0)}]
	\draw (0,0) -- coordinate (x axis mid) (10,0);
    \draw (0,0) -- coordinate (y axis mid) (0,25);
    \foreach \x in {0.2,0.4,0.6,0.8,1}
    	\draw (\x*10,2pt) -- (\x*10,-2pt)
        node[anchor=north] {\x};
 
	\node[below=0.5cm] at (x axis mid) {allocation of good 3};

	\filldraw (0,0) circle (0.06cm) node[above right]{\hspace{12pt} agent 2};
	\filldraw (2,6) circle (0.06cm) node[right]{\hspace{6pt} agent 1};
    \filldraw (4,15) circle (0.06cm) node[right]{agents 3,4};
    \draw[dashed] (0,0) -- (2,6) -- (4,15) -- (5,26);
\end{scope}
\end{tikzpicture}
\caption{example price curves for allocation $\x$}\label{fig:graphx}
\end{subfigure}
\caption{An illustrative example allocation and the construction of the corresponding agent-order matrix.}\label{fig:gdf_intuition}
\end{figure}

Finally, we come to the more complicated observation (3). We will first lay out how the agent-order matrix is constructed, then illustrate its connection to both prices and group domination. The matrix will have $n$ rows, one for each agent, and a sub-block for each good, as shown in Figure~\ref{fig:matrixx}. Within a sub-block, each column will correspond to a non-zero agent allocation (i.e., the non-zero points shown in Figure~\ref{fig:graphx}). The entry corresponding to agent $i$, good $j$, and allocation threshold $\tau \in \bbrpos$ will equal 1 if $x_{ij} \geq \tau$ and 0 otherwise. Essentially, this will indicate which agent pays the cost of the first, second, etc.\ section of each price curve. Additionally, we append a column of $-1$'s to the end of the matrix. To see the connection to prices, consider a vector $\y$ such that $A \y = \mathbf{0}, \y \neq \mathbf{0}$. For instance, Figure~\ref{fig:row_sum} exhibits such a vector $\y$ for the matrix $A$ shown in Figure~\ref{fig:matrixx}.
$\y$ will represent prices, so we require all the entries to be non-negative, denoted $\y \geq 0$; for strictly increasing price curves, we require $\y$ to be strongly positive\footnote{Recall that a strongly positive vector has every entry greater than 0.}, denoted $\y \gg 0$. Specifically, within each block (corresponding to a good $j$), the first entry represents the cost of increasing from 0 of good $j$ to the first non-zero point on the price curve, the second entry represents the cost of increasing from the first point to the second point, and so on. The last entry in $\y$, which we can assume equals 1 without loss of generality, represents the total cost expended by each agent. Thus $A \y = \mathbf{0}$ ensures that each agent spends exactly 1 unit of money. (Ensuring that condition (b) of Lemma~\ref{lem:leontief-demand} be met is slightly more complicated. However, for strictly increasing price curves, it is trivially satisfied.)

\begin{figure}
\centering
\begin{subfigure}[t]{.42\textwidth}
\centering
$\left[\begin{array}{ccc:c:cc:c}
1&1&1&0&1&0&-1 \\
1&1&0&0&0&0&-1 \\
1&0&0&1&1&1&-1 \\
0&0&0&0&1&1&-1 \\
\end{array}\right] \times
\setlength{\dashlinegap}{2pt}
\left[\begin{array}{c}
0 \\ 1 \\ 0 \\ \hdashline
0 \\ \hdashline
0 \\ 1 \\ \hdashline
1 \\
\end{array}\right] = \mathbf{0}
\setlength{\dashlinegap}{4pt}$
\caption{column sum of $A$ implies price curves \\(in this example, weakly increasing)}\label{fig:row_sum}
\end{subfigure}
\hfill
\begin{subfigure}[t]{.47\textwidth}
\centering
$\left[\begin{array}{cccc}
\phantom{-}1&\phantom{-}1&\phantom{-}1&\phantom{-}0\phantom{-} \\
\phantom{-}1&\phantom{-}1&\phantom{-}0&\phantom{-}0\phantom{-} \\
\phantom{-}1&\phantom{-}0&\phantom{-}0&\phantom{-}0\phantom{-} \\ \hdashline
\phantom{-}0&\phantom{-}0&\phantom{-}1&\phantom{-}0\phantom{-} \\ \hdashline
\phantom{-}1&\phantom{-}0&\phantom{-}1&\phantom{-}1\phantom{-} \\
\phantom{-}0&\phantom{-}0&\phantom{-}1&\phantom{-}1\phantom{-} \\ \hdashline
-1&-1&-1&-1\phantom{-} \\
\end{array}\right] \times
\left[\hspace{-3pt}\begin{array}{c}
\phantom{-}1 \\ -1 \\ \phantom{-}0 \\ \phantom{-}0 \\
\end{array}\right] = 
\setlength{\dashlinegap}{2pt}
\left[\begin{array}{c}
0 \\ 0 \\ 1 \\ \hdashline
0 \\ \hdashline
1 \\ 0 \\ \hdashline
0 \\
\end{array}\right]
\setlength{\dashlinegap}{4pt}$
\caption{row sum of $A$ implies group domination \\(precludes strictly increasing price curves)}\label{fig:col_sum}
\end{subfigure}
\caption{Example row and column sums of the agent-order matrix from Figure~\ref{fig:matrixx}.}\label{fig:sums}
\end{figure}

Thus we can see that the column sums of the agent-order matrix correspond to agent expenditures, where the weight of each column in the sum corresponds to a section of the price curve. \emph{Row} sums, however, correspond to group domination. To see the connection, consider a vector $\z$ such that $A^T \z$ is strictly positive\footnote{Recall that a strictly positive vector has entries in $\bbrpos$ with at least one entry non-zero.}, denoted $A^T \z > \mathbf{0}$. For instance, Figure~\ref{fig:col_sum} exhibits such a vector $\z$ for the matrix $A$ shown in Figure~\ref{fig:matrixx}.
In a given $\z$, the positive entries correspond to the weighted agents in a dominating group $\A$, while the (absolute value of the) negative entries are the weighted agents in group $\B$. Since the last entry of $A^T \z$ must be nonnegative, the total weight of $\B$ is at least as large as that of $\A$. And since $A^T \z > \mathbf{0}$, all the entries are non-negative and at least one other entry must be positive. This means that at every point on a price curve (any $j,\tau$), the weight of group $\A$ purchasing at least $\tau$ of good $j$ is at least as much as the weight of group $\B$ purchasing $\tau$, and for some ($j,\tau$) this is strict. Clearly this is equivalent to $\A \succ \B$.

Having constructed the agent-order matrix and related its column and row sums to prices and group domination, respectively, the final step applies a previously-known duality result equivalent to Farkas' Lemma \cite{perng_2017}, which establishes that valid prices (column sums) exist if and only if group domination (row sums) do \emph{not}. Specifically, we make use of the following result originally due to Stiemke to prove Theorem~\ref{thm:gdf-final}.

\begin{lemma}[1.6.4 in \citep{stoer_1970_convexity}]\label{lem:matrix}
For a commutative, ordered field $\mathbb{F}$, $A$ a matrix over $\mathbb{F}$, the following are equivalent: (1) $A \y = \mathbf{0}, \y \gg \mathbf{0}$ has no solution. (2) $A^{T} \z > \mathbf{0}$ has a solution.
\end{lemma}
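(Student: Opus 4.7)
This is a classical theorem of alternatives (often called Stiemke's lemma or a version of Gordan's theorem), and the plan is to prove the two implications separately. The direction (2) $\Rightarrow$ (1) is by a short orthogonality argument: if a strongly positive $\y$ with $A\y = \mathbf{0}$ and a $\z$ with $A^T\z > \mathbf{0}$ both existed, then $\z^T(A\y) = 0$, while $(A^T\z)^T\y = \sum_i (A^T\z)_i y_i$ is a sum in which every $y_i$ is strictly positive and every $(A^T\z)_i$ is nonnegative, with at least one strictly positive. So the sum is strictly positive, contradicting $\z^T(A\y) = (A^T\z)^T\y$.

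For (1) $\Rightarrow$ (2), I would consider the linear program $\max\ t$ subject to $A\y = \mathbf{0}$, $\y \geq t\mathbf{1}$, $\y \leq \mathbf{1}$, $\y \geq \mathbf{0}$, and $t \geq 0$. Any $\y \gg \mathbf{0}$ satisfying $A\y = \mathbf{0}$, rescaled so that $\max_i y_i = 1$, would be feasible with objective value $\min_i y_i > 0$, so under hypothesis (1) the optimum must be $t^* = 0$. Taking the dual, with $\z$ free for the equality, $\boldsymbol{\alpha} \geq \mathbf{0}$ for $\y \geq t\mathbf{1}$, and $\boldsymbol{\beta} \geq \mathbf{0}$ for $\y \leq \mathbf{1}$, yields $\min \mathbf{1}^T\boldsymbol{\beta}$ subject to $A^T\z \geq \boldsymbol{\alpha} - \boldsymbol{\beta}$, $\mathbf{1}^T\boldsymbol{\alpha} \geq 1$, and $\boldsymbol{\alpha}, \boldsymbol{\beta} \geq \mathbf{0}$. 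Both programs are feasible (the primal by $\y = \mathbf{0}, t = 0$; the dual by $\z = \mathbf{0}$, $\boldsymbol{\alpha} = \boldsymbol{\beta} = \mathbf{1}/n$), so strong duality forces the dual optimum to also equal $0$. At any such dual optimum, $\mathbf{1}^T\boldsymbol{\beta} = 0$ combined with $\boldsymbol{\beta} \geq \mathbf{0}$ gives $\boldsymbol{\beta} = \mathbf{0}$, leaving $A^T\z \geq \boldsymbol{\alpha}$ with $\boldsymbol{\alpha} \geq \mathbf{0}$ and $\boldsymbol{\alpha} \neq \mathbf{0}$. Hence $A^T\z$ is coordinatewise nonnegative with at least one strictly positive entry, i.e., $A^T\z > \mathbf{0}$.

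The main obstacle is choosing the LP so that strong duality applies cleanly and so that the dual certificate witnesses a \emph{strictly} positive vector rather than merely a nonnegative one; the compactness built in by the $\y \leq \mathbf{1}$ bound addresses the first concern, and the constraint $\mathbf{1}^T\boldsymbol{\alpha} \geq 1$ (which forces $\boldsymbol{\alpha} \neq \mathbf{0}$) addresses the second. An alternative approach is to apply a separating hyperplane theorem directly to the convex set $\{A\y : \y \gg \mathbf{0}\} \subseteq \bbr^m$, which excludes $\mathbf{0}$ under (1); the subtlety there is that this set is open (the image of an open cone under a linear map) and not closed, so one must work carefully with compact truncations such as $\{A\y : \y \geq \epsilon\mathbf{1},\ \mathbf{1}^T\y = 1\}$ and pass $\epsilon \downarrow 0$. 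The LP formulation sidesteps this nonclosedness issue and makes the algebra of extracting $\z$ completely transparent.
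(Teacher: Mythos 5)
Your proof is correct, but note that the paper does not prove this lemma at all: it is quoted verbatim as Theorem~1.6.4 of Stoer--Witzgall (Stiemke's transposition theorem) and used as a black box, so there is no in-paper argument to compare against. Your two halves both check out. The easy direction is the standard orthogonality contradiction $0 = \z^T(A\y) = (A^T\z)^T\y > 0$. For the hard direction, your LP $\max t$ s.t.\ $A\y = \mathbf{0}$, $\y \ge t\mathbf{1}$, $\y \le \mathbf{1}$, $\y, t \ge 0$ is bounded and feasible, has optimum $t^* = 0$ exactly under hypothesis (1), and its dual $\min \mathbf{1}^T\boldsymbol{\beta}$ s.t.\ $A^T\z \ge \boldsymbol{\alpha} - \boldsymbol{\beta}$, $\mathbf{1}^T\boldsymbol{\alpha} \ge 1$, $\boldsymbol{\alpha},\boldsymbol{\beta} \ge \mathbf{0}$ is derived correctly; forcing $\boldsymbol{\beta} = \mathbf{0}$ at a dual optimum with value $0$ and using $\mathbf{1}^T\boldsymbol{\alpha} \ge 1$ to guarantee a strictly positive coordinate of $A^T\z$ is exactly the right way to get strict positivity (in the paper's sense of ``nonnegative and not identically zero'') rather than mere nonnegativity. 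One small point worth making explicit: the lemma is stated over an arbitrary commutative ordered field $\mathbb{F}$, and your primary route survives this generality because LP strong duality is a purely algebraic fact (provable by the simplex method or Fourier--Motzkin elimination over any ordered field), whereas your alternative separating-hyperplane sketch would require completeness of the field and so only works over $\bbr$. For the paper's actual application the matrix is rational, so either reading suffices, and your LP argument even yields the rationality of $\y$ and $\z$ that the paper's subsequent Remark relies on.
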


\begin{theorem}\label{thm:gdf-final}
Let $\x$ be any allocation that obeys the supply constraints and gives at least one agent a nonempty bundle. Then $\x$ be can supported by strictly increasing price curves if and only if $\x$ is GDF.
\end{theorem}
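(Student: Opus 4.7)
The plan is to follow the roadmap laid out in the intuition: formalize the agent-order matrix $A$ associated with $\x$, show that strongly positive solutions to $A\y=\zero$ correspond precisely to strictly increasing piecewise linear price curves supporting $\x$, show that solutions to $A^T\z > \zero$ correspond precisely to pairs $\A \succ \B$ witnessing group domination, and then invoke Stiemke's lemma (Lemma~\ref{lem:matrix}) to conclude that exactly one of the two alternatives holds.

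First I would formally construct $A$. For each good $j$, let $\tau_{j,1} < \tau_{j,2} < \cdots < \tau_{j,k_j}$ be the distinct nonzero values in $\{x_{ij} : i \in N\}$. Create $k_j$ columns for good $j$, where the entry in row $i$ of the $t$-th column is $1$ if $x_{ij} \geq \tau_{j,t}$ and $0$ otherwise. Append a final column of $-1$'s. Given any $\y \gg \zero$ with $A\y = \zero$, I would construct price curves by letting $f_j$ be the piecewise linear function that passes through the points $(0,0), (\tau_{j,1}, y_{j,1}), (\tau_{j,1} + \tau_{j,2}, y_{j,1}+y_{j,2}), \ldots$ (with the final piece extrapolated linearly beyond the supply bound), where $y_{j,t}$ denotes the entry of $\y$ corresponding to the $t$-th column of good $j$'s block. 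By scaling $\y$ so its last entry is $1$, the equation $A\y=\zero$ says each agent's total spending equals $1$. Since $\y \gg \zero$, every piece has strictly positive slope, so the $f_j$ are strictly increasing, and Lemma~\ref{lem:leontief-demand}(b) is automatic because buying any $\ep > 0$ more of good $j$ strictly increases the cost. Thus $x_i \in D_i(\f)$ for all $i$, and $(\x,\f)$ is a price curve equilibrium. Conversely, given strictly increasing price curves $\f$ supporting $\x$, setting $y_{j,t} = f_j(\tau_{j,t}) - f_j(\tau_{j,t-1})$ (with $\tau_{j,0} = 0$) and the last entry to $1$ gives a $\y \gg \zero$ satisfying $A\y = \zero$, because by Lemma~\ref{lem:leontief-demand}(a) each agent spends exactly $1$.

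Next I would show that a solution $\z$ to $A^T\z > \zero$ is precisely a group-domination witness. Split $\z$ into positive entries (weights $\A$) and absolute values of negative entries (weights $\B$). The constraint from the final column of $A$ (the $-1$'s) gives $\sum_i b_i \geq \sum_i a_i$; by the assumption that some agent has a nonempty bundle and by an easy padding argument one can arrange equality (if $\sum_i b_i > \sum_i a_i$, add the slack uniformly to the $a_i$'s of agents with nonempty bundles without breaking any inequality). Each other row-sum inequality says exactly $\sum_{i: x_{ij} \geq \tau_{j,t}} a_i \geq \sum_{i: x_{ij} \geq \tau_{j,t}} b_i$, and since $A^T\z > \zero$, at least one inequality is strict; group domination at an arbitrary threshold $\tau$ reduces by monotonicity to group domination at these critical thresholds. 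The converse is equally direct: a group domination $\A \succ \B$ yields $\z = \A - \B$ with $A^T\z > \zero$.

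Finally I would apply Lemma~\ref{lem:matrix} to the matrix $A$: exactly one of the two systems ($A\y=\zero, \y\gg\zero$) or ($A^T\z > \zero$) has a solution. By the two equivalences above, the first is ``$\x$ admits strictly increasing supporting price curves'' and the second is ``$\x$ is not GDF,'' completing the proof. The main obstacle will be the bookkeeping around the final column of $-1$'s -- specifically, ensuring in the converse direction that the budget-balance condition $\sum a_i = \sum b_i$ in the definition of GDF aligns exactly with the sign of the last entry of $A^T\z$, and handling the corner case where the dominating weight sums differ; the nonempty-bundle hypothesis is used here to absorb any slack. The rest is conceptually a direct translation between linear algebra and the combinatorial objects.
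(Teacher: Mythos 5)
Your proposal is correct and follows essentially the same route as the paper: the same agent-order matrix with a trailing column of $-1$'s, the same correspondence between strongly positive solutions of $A\y=\zero$ and piecewise-linear strictly increasing price curves, the same reading of $A^T\z>\zero$ as a group-domination witness (with the slack in $\sum_i a_i$ vs.\ $\sum_i b_i$ absorbed using the nonempty-bundle hypothesis), and the same appeal to Stiemke's lemma. One small transcription slip: the breakpoints of $f_j$ should be $\bigl(\tau_{j,t},\ \sum_{s\le t} y_{j,s}\bigr)$ — cumulative in the price coordinate only — rather than at the cumulative abscissae $\tau_{j,1}+\cdots+\tau_{j,t}$, since otherwise $f_j(\tau_{j,t})$ would not equal the intended cumulative price and $A\y=\zero$ would no longer encode unit spending.
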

\begin{proof}
Recall that an allocation $\x$ can be supported if there exist price curves $\f$ such that $x_i \in D_i(\f) \ \forall i \in N$, and $\sum_{i \in N} x_{ij} \leq 1 \ \forall j \in M$ (i.e., $\x$ obeys the supply constraints). The latter condition is satisfied by assumption, and by Lemma~\ref{lem:leontief-demand}, for Leontief utilities and strictly increasing price curves, the former condition holds if and only if the cost $C_\f(x_i) = 1 \ \forall i$.

Let $X_j = \{x_{ij} \mid i \in N\} \setminus \{0\}$ be the set of distinct, non-zero amounts of good $j$ allocated to some agent under $\x$. Label the elements of $X_j$ as $\tau_j^1, \tau_j^2,\ldots, \tau_j^{|X_j|}$ such that $\tau_j^1 < \tau_j^2 < \cdots  < \tau_j^{|X_j|}$. Since $f_j(0) = 0$, $f_j(x \notin X_j)$ in some sense doesn't matter -- we only require that these ``in-between'' areas of the price curve don't violate continuity and are strictly increasing. Thus there exist strictly increasing price curves $\f$ supporting $\x$ if and only if there exist functions $f_j': X_j \to \bbrpos$ such that $0 < f_j'(\tau_j^1) < f_j'(\tau_j^2) < \ldots  < f_j'(\tau_j^{|X_j|}) \ \forall j$ and $C_\f(x_i) = \sum_{j} f_j'(x_{ij}) = 1 \ \forall i$.

Now we are ready to set up the agent-order matrix $A \in \mathbb{Q}^{n \times \left(\sum_j |X_j| + 1\right)}$ to which we will apply Lemma~\ref{lem:matrix}. Since each column will represent an allocation point for a specific good (corresponding to its sub-block), we will write the column indices as $\sum_{\ell < j} |X_\ell| + q$, where $j$ indicates the sub-block and $1 \leq q \leq |X_j|$ is the index within that sub-block.
$$A\left[i, \ \sum_{\ell < j} |X_\ell| + q\right] = 
\begin{cases}
-1 &\text{if } j=m+1, q=1 \text{ (last column)} \\
\phantom{-}0 &\text{if } x_{ij} < \tau_j^{q} \\
\phantom{-}1 &\text{otherwise}
\end{cases}$$
Thus each row of $A$ represents an agent, and each column (except the last) represents one point of the functions $\f'$. Since $\x$ gives at least one agent a nonempty bundle by assumption, $A$ has at least two columns (one allocation point and the column of $-1$'s). We know by Lemma~\ref{lem:matrix} that $\exists \y \gg \mathbf{0}$ such that $A \y = 0$ if and only if there does not exist a $\z$ such that $A^T \z > \mathbf{0}$. To complete the proof, we will show that the former condition is equivalent to the existence of strictly increasing price curves supporting $\x$, and the latter is equivalent to a group domination.

If $\exists \y \gg \mathbf{0}$ such that $A \y = \mathbf{0}$, we may assume without loss of generality that the last entry in $\y$ is 1. Furthermore, define $f_j'(\tau_j^q) - f_j'(\tau_j^{q-1}) = y_{\sum_{\ell < j} |X_\ell| + q}$ (for convenience, define $f_j'(\tau_j^0) = f'_j(0) = 0$). Clearly $\y \gg \mathbf{0}$ is equivalent to the requirement that $0 < f_j'(\tau_j^1) < \ldots  < f_j'(\tau_j^{|X_j|}) \ \forall j$. Additionally,
\begin{align*}
C_\f(x_i) &= \sum_j f_j(x_{ij}) = \sum_j f_j'(x_{ij}) = \sum_j \sum_{q : x_{ij} \geq \tau_j^q} y_{\sum_{\ell < j} |X_\ell| + q} = A_i \y + 1
\end{align*}
Thus $A\y = \mathbf{0}$ is equivalent to the requirement that $C_\f(x_i) = 1 \ \forall i$.

Finally, consider $\z$ such that $A^T \z > \mathbf{0}$. This is equivalent to a group domination $\A \succ \B$, where $a_i = z_i$ if $z_i > 0$, $b_i = -z_i$ if $z_i < 0$, and all other entries are 0. Consider the product of the last column of $A$ with $\z$, which equals $-\sum_i z_i \geq 0$. Without loss of generality, we can assume $\sum_i z_i = 0$, and thus $\sum_i a_i = \sum_i b_i$. If this is not true, then $\B$ would have greater weight than $\A$, and decreasing any weight in $\B$ can only increase coordinates of $A^T \z$ or equivalently widen the gap between $\A$ and $\B$ in terms of group domination. Now observe that for any good $j$ and $\tau \in \bbrpos$, $\sum_{i \in N : x_{ij} \geq \tau} (a_i - b_i)$ is equal to the dot product of column $\sum_{\ell < j} |X_\ell| + q$ of $A$ by $\z$, where $q$ is the largest value such that $\tau_j^q \leq \tau$. This holds because $A\left[i,\sum_{\ell < j} |X_\ell| + q\right]$ is an indicator variable for $x_{ij} \geq \tau_j^q$, and by construction no agent can have an allocation amount between $\tau_j^q$ and $\tau_j^{q+1}$. Therefore $A^T \z > \mathbf{0}$ is equivalent to the requirement that $\sum_{i \in N : x_{ij} \geq \tau} (a_i - b_i) \geq 0$ for all ($j, \tau$) and that for some ($j,\tau$) this inequality is strict, i.e., $A^T \z > \mathbf{0}$ is equivalent to $\A \succ \B$.
\end{proof}

\begin{remark}
Since the matrix $A$ constructed in the proof of Theorem~\ref{thm:gdf-final} is over the rationals, we can also assume that the $\y$ or $\z$ obtained via Lemma~\ref{lem:matrix} are over the rationals. In particular, we can scale $\z$ to obtain $\z' \in \mathbb{Z}^n$ with $A^T z' > 0$. Equivalently, this means that if $\A \succ \B$, we can assume without loss of generality that $a_i, b_i \in \mathbb{Z}$.
\end{remark}


This characterization, in addition to allowing us to prove some of our subsequent results, implies that we can compute price curves (or show that they do not exist) for a particular instance in polynomial time. This is exhibited by the following linear program.

\begin{theorem}\label{thm:gdf-lp}
Given a set of agents $N$, goods $M$, and an allocation $\x \in \bbr_{\ge 0}^{n \times m}$, let $A$ be the corresponding agent-order matrix. In the following linear program, the optimal objective value is strictly positive if and only if there exist strictly increasing price curves supporting $\x$, in which case $\y$ defines such curves.
\begin{align*}
\max_{\y, \eta} \ \eta \quad \quad &\ \\
s.t. \quad A\y =&\ \mathbf{0}\\
	y_k \ge&\ \eta\quad \forall k\\
	y_{-1} =&\ 1
\end{align*}
\end{theorem}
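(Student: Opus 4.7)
The plan is to leverage the characterization built in the proof of Theorem~\ref{thm:gdf-final}: strictly increasing price curves supporting $\x$ are in bijection with vectors $\y \gg \mathbf{0}$ satisfying $A\y = \mathbf{0}$ and $y_{-1} = 1$, where $A$ is the agent-order matrix. The LP in the statement is engineered to detect the existence of such a vector by maximizing the minimum coordinate of $\y$ on the affine slice $\{A\y = \mathbf{0},\, y_{-1} = 1\}$; the auxiliary variable $\eta$ together with the constraints $y_k \geq \eta$ is the standard LP trick for encoding $\max \min_k y_k$. So Theorem~\ref{thm:gdf-lp} is essentially a packaging corollary of Theorem~\ref{thm:gdf-final} rather than a new structural result.

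For the forward direction, I would start from strictly increasing price curves $\f$ supporting $\x$ and reuse the construction in the proof of Theorem~\ref{thm:gdf-final}: setting $y_{\sum_{\ell<j}|X_\ell|+q} = f_j(\tau_j^q) - f_j(\tau_j^{q-1})$ (with $f_j(\tau_j^0) = 0$) and $y_{-1} = 1$ produces a strongly positive $\y$ satisfying $A\y = \mathbf{0}$. Choosing $\eta = \min_k y_k > 0$ then gives a feasible LP solution with strictly positive objective, so the LP optimum is strictly positive. For the converse, I would suppose the LP optimum equals some $\eta^* > 0$ attained at $(\y^*,\eta^*)$. The inequalities $y_k^* \geq \eta^* > 0$ immediately yield $\y^* \gg \mathbf{0}$, and combined with $A\y^* = \mathbf{0}$ and $y_{-1}^* = 1$ this hits the hypothesis of the forward step in the proof of Theorem~\ref{thm:gdf-final}: reading each sub-block of $\y^*$ as successive cost increments of a piecewise linear curve recovers strictly increasing price curves supporting $\x$.

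The remaining loose end, rather than a substantive obstacle, is ruling out the edge cases where the LP is infeasible or has optimum $\leq 0$: in either situation Theorem~\ref{thm:gdf-final} already implies that no strongly positive null vector of $A$ with $y_{-1}=1$ exists, and hence no strictly increasing price curves support $\x$, so the biconditional holds. Polynomial-time solvability is immediate because $A \in \mathbb{Q}^{n \times (\sum_j |X_j|+1)}$ with $\sum_j |X_j| \leq nm$, so the LP has $O(nm)$ variables and $O(nm)$ constraints, all with entries in $\{-1,0,1\}$ aside from the normalization $y_{-1}=1$, and can be solved by any polynomial-time linear programming algorithm.
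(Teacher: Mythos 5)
Your proposal is correct and follows essentially the same route as the paper: both reduce the theorem to the equivalence (from Theorem~\ref{thm:gdf-final}) between strictly increasing supporting price curves and strongly positive solutions of $A\y=\mathbf{0}$, and both observe that the constraints $y_k\ge\eta$ with objective $\max\eta$ (plus the normalization $y_{-1}=1$ to fix the scale) encode strong positivity, with a positive optimum yielding piecewise linear price curves from the increments in $\y$. Your added remarks on the infeasible/nonpositive-optimum cases and on polynomial-time solvability are consistent with the paper's surrounding discussion.
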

\begin{proof}
As per the proof of Theorem~\ref{thm:gdf-final}, there exist strictly increasing price curves supporting $\x$ if and only if there is a solution to the system $A \y = \mathbf{0}, \y \gg 0$. To turn this into a valid linear program, instead of the strict inequality $y_k > 0$ for each coordinate of $\y$, we write $y_k \geq \eta$ and attempt to maximize $\eta$. Furthermore, we restrict the final entry of $\y$ as $\y_{-1} = 1$, since otherwise $\y$ can be scaled arbitrarily. If there is a solution with $\eta > 0$, then $\y$ corresponds to price curves as before, with each entry representing the difference in price between adjacent allocation amounts. These points simply need to be connected, e.g., in a piecewise linear fashion, to constitute valid price curves.
\end{proof}

One may wonder if Theorem~\ref{thm:gdf-final} generalizes to other classes of utility functions. Unfortunately, the answer in general is no. Example~\ref{counter:ef-gdf} gives an instance with linear utilities that is GDF, but cannot be supported by price curves.

In Section~\ref{sec:maxmin}, we will show how the group-domination-freeness concept can be useful for proving that allocations of interest can be supported by price curves: specifically, allocations with optimal (or near optimal) max-min welfare. But first, a word about unequal budgets.

\subsection{Unequal budgets}

It turns out that the characterization theorem of the previous section easily generalizes to agents with unequal budgets. Since price curves are strictly increasing, the only additional requirement for an allocation $\x$ to be supported is that each agent spends her entire budget $B_i$. In the agent-order matrix, the last column of $-1$'s corresponded to each agent's expenditure, so we simply need to replace $-1$ with $-B_i$ for each row $i$.

Following Lemma~\ref{lem:matrix} with the modified agent-order matrix, the if-and-only-if characterization becomes ``budget-weighted group-domination-freeness''. A budget-weighted group domination still requires that for all ($j,\tau$),
\[\sum_{i \in N : x_{ij} \geq \tau} a_i \geq \sum_{i \in N : x_{ij} \geq \tau} b_i\]
and that there exists $j,\tau$ where the inequality is strict. The only difference is that instead of requiring both groups to have the same total weight, that weight is now scaled by each agent's budget. That is, $\sum_{i \in N} a_i B_i = \sum_{i \in N} b_i B_i$. Note that when $B_i = 1$ for all $i$, this recovers the definition of group domination.

\section{CES welfare}\label{sec:ces-duality}

In this section, we consider CES welfare functions:
\[
\Phi_{CES}(\x) = \Big(\sum_{i \in N} u_i(x_i)^\rho \Big)^{1/\rho}
\]
This section contains our second main result: that in the bandwidth setting (i.e., agents have Leontief utilities where $w_{ij} \in \{0,1\}$ for all $i \in N$, $j \in M$), for any $\rho \in (-\infty, 0)\cup (0,1)$, any maximum CES welfare allocation can be supported by price curves (Theorem~\ref{thm:bandwidth}). We present this result this in Section~\ref{sec:ces-main}. Next, we discuss why we are optimistic about the possibility of a simple decentralized primal-dual algorithm for computing these price curves, similar to the work of Kelly et al.~\cite{kelly_1998_rate} (Section~\ref{sec:decentralized}). We also give a converse of sorts to Theorem~\ref{thm:bandwidth} (Section~\ref{sec:ces-converse}), and briefly discuss the case of unequal budgets (Section~\ref{sec:ces-unequal-budgets}). Throughout this section, we let $R_i = \{j \in M: w_{ij} = 1\}$ for brevity.

\subsection{Main CES welfare result}\label{sec:ces-main}

We now state and prove Theorem~\ref{thm:bandwidth}. Our proof uses the dual of the convex program for maximizing CES welfare to construct explicit price curves that support a maximum CES welfare allocation. The price curves take the very simple form of $f_j(x) = q_j x^{1-\rho}$ for constants $q_1,\dots,q_m$ that are derived from the dual. 

\begin{restatable}{theorem}{thmBandwidth}\label{thm:bandwidth}
If $w_{ij} \in \{0,1\}$ for all $i \in N$ and $j \in M$, then for any $\rho \in (-\infty, 0) \cup (0,1)$, any maximum CES welfare allocation can be supported by price curves of the form $f_j(x) = q_j x^{1-\rho}$ for each $j \in M$.
\end{restatable}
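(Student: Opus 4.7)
The plan is to formulate the maximum-CES-welfare problem as a convex program in the agent utilities, take its Lagrangian dual, and read off the price curves $f_j(x) = q_j x^{1-\rho}$ directly from the optimal dual variables $q_j$ attached to the supply constraints. Because $w_{ij} \in \{0,1\}$ and each agent buys in exact proportion to her weights, an allocation is determined by $\bfu = (u_1, \ldots, u_n)$ via $x_{ij} = u_i$ for $j \in R_i$ and $x_{ij} = 0$ otherwise, and the supply constraints become the linear inequalities $\sum_{i : j \in R_i} u_i \leq s_j$. For $\rho \in (0,1)$ the objective $\sum_i u_i^\rho$ is concave; for $\rho < 0$ it is convex and is being minimized (since $1/\rho < 0$). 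Either way, maximizing $\Phi_{CES}$ reduces to a convex program over a polytope with Slater's condition easily verified, so strong duality and the KKT conditions apply.

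Next, I apply KKT with nonnegative multipliers $q_j$ on the supply constraints. A short preliminary argument shows that every agent has $u_i > 0$ at the optimum: the marginal $\rho u_i^{\rho-1}$ blows up as $u_i \to 0^+$ whenever $\rho < 1$, $R_i$ is nonempty by assumption, and $s_j > 0$, so one can always shift a little mass toward agent $i$ and strictly improve CES welfare. Stationarity in $u_i$ then yields, after absorbing the overall CES normalization factor and the magnitude and sign of $\rho$ into a rescaling of the multipliers, the clean relation
\[
u_i^{\rho-1} = \sum_{j \in R_i} q_j \quad \text{for every agent } i,
\]
with complementary slackness ensuring $q_j = 0$ whenever the supply of good $j$ is not exhausted.

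Finally, I verify that the price curves $f_j(x) = q_j x^{1-\rho}$ support the optimal allocation via Lemma~\ref{lem:leontief-demand}. For the budget condition,
\[
C_\f(x_i) = \sum_{j \in R_i} q_j\, u_i^{1-\rho} = u_i^{1-\rho} \sum_{j \in R_i} q_j = u_i^{1-\rho} \cdot u_i^{\rho-1} = 1.
\]
For the strictly-increasing-at-some-good condition, $\sum_{j \in R_i} q_j = u_i^{\rho-1} > 0$ forces $q_j > 0$ for at least one $j \in R_i$, and on that coordinate $f_j$ is strictly increasing because $x^{1-\rho}$ is strictly increasing whenever $1-\rho > 0$ (which holds throughout our range of $\rho$). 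Demand-does-not-exceed-supply is exactly primal feasibility.

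I expect the main obstacle to be handling the two sign regimes of $\rho$ uniformly: for $\rho < 0$ one maximizes $\Phi_{CES}$ by \emph{minimizing} $\sum_i u_i^\rho$, which flips the sign of the Lagrangian and requires care to conclude that the rescaled multipliers are nonnegative and that the stated stationarity relation still holds with the right signs. A secondary subtlety is rigorously ruling out $u_i = 0$ at the optimum, which is what makes the stationarity relation meaningful and ultimately guarantees that some $q_j > 0$ lies in each $R_i$, so that the second condition of Lemma~\ref{lem:leontief-demand} is met.
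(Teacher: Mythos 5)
Your proposal is correct and follows essentially the same route as the paper: write the CES maximization as a convex program, invoke Slater/strong duality and the KKT conditions, derive $u_i^{\rho-1}=\sum_{j\in R_i}q_j$ (ruling out $u_i=0$ via the blow-up of the marginal), and verify the price curves $f_j(x)=q_jx^{1-\rho}$ through Lemma~\ref{lem:leontief-demand}. The only difference is cosmetic: you eliminate the $x_{ij}$ variables (and hence the intermediate multipliers $\lambda_{ij}$) up front using proportional purchasing, whereas the paper keeps them and shows $\lambda_{ij}^*=q_j^*$ via stationarity in $x_{ij}$; the paper also sidesteps your two-sign-regime concern by maximizing $\frac{1}{\rho}\sum_i u_i^\rho$, which is concave for all $\rho\in(-\infty,0)\cup(0,1)$.
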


\begin{proof}
The proof proceeds in three steps.

\textbf{Step 1: Setting up the convex program.} We begin by writing the following program to maximize CES welfare:
\begin{alignat*}{2}
\max\limits_{\substack{\x \in \bbr_{\ge 0}^{n\times m},\\ \bfu = (u_1...u_n) \in \bbrpos^n}} &\ \Big(\sum_{i \in N} u_i^\rho  \Big)^{1/\rho} \\\nonumber
s.t.\ &\ u_i \le x_{ij} \quad\quad &&\ \forall i \in N, j \in R_i\\
&\ \sum\limits_{i \in N} x_{ij}\leq s_j\quad\quad &&\ \forall j \in M 
\end{alignat*}
We could also have written the first constraint as $u_i w_{ij} \le x_{ij}$, but since $w_{ij} \in \{0,1\}$, the above formulation is equivalent. Also, the objective $\Big(\sum_{i \in N} u_i^\rho  \Big)^{1/\rho}$ is concave for any $\rho \in (-\infty, 0)\cup(0,1)$, so the resulting program is convex.

We can remove the exponent of $1/\rho$ from the objective without affecting the optimal point: the optimal value may be affected, but the optimal solution (i.e., the $\argmax$) will not. When $\rho$ is negative, this changes the program to a minimization program, but this can be handled by adding a factor of $1/\rho$ to the objective.\footnote{We add a factor of $1/\rho$ instead of $\rho$ because this will slightly simplify the analysis.} Thus consider a new convex program with objective function $\max\limits_{\substack{\x \in \bbr_{\ge 0}^{n\times m}, \bfu \in \bbrpos^n}} \frac{1}{\rho}\sum_{i \in N} u_i^\rho$, and the same constraints.

Next, we write the Lagrangian of the new program. Let $\lam_{ij}$ be the Lagrange multiplier associated with the constraint $u_i \le x_{ij}$ and let $q_j$ be the Lagrange multiplier associated with the constraint $\sum_{i \in N} x_{ij}\leq s_j$. We will use $\bflam$ and $\q$ to denote the vectors of all such Lagrange multipliers. Then the Lagrangian is given by
\[
L(\x,\bfu, \bflam, \q) = \frac{1}{\rho}\sum_{i \in N} u_i^\rho - \sum_{i \in N} \sum_{j \in R_i} \lam_{ij} (u_i - x_{ij}) - \sum_{j \in M} q_j\Big(\sum_{i \in N} x_{ij} - s_j\Big)
\]
Consider any maximum CES welfare allocation: this corresponds to a point $(\mathbf{x^*}, \mathbf{u^*})$ which is optimal for the primal. We have strong duality by Slater's condition, so there must exist $\boldsymbol{\lam^*}$ and $\mathbf{q^*}$ such that $(\mathbf{x^*}, \mathbf{u^*}, \boldsymbol{\lam^*}, \mathbf{q^*})$ is optimal for $L$.

\textbf{Step 2: Using the KKT conditions.} The KKT conditions imply that the gradient of $L$ evaluated at $(\mathbf{x^*}, \mathbf{u^*}, \boldsymbol{\lam^*}, \mathbf{q^*})$ must be zero for every variable with a positive value. Specifically, for each variable $y$, either  $\frac{\partial L}{\partial y} = 0$, or $y = 0$ and $\frac{\partial L}{\partial y} \le 0$.

First, we have $\frac{\partial L}{\partial u_i} (\mathbf{x^*}, \mathbf{u^*}, \boldsymbol{\lam^*}, \mathbf{q}^*) = {u_i^*}^{\rho - 1} - \sum_{j \in R_i} \lam_{ij}^* = 0$ for every $i \in N$ with $u_i^* > 0$. Suppose $u_i^* = 0$ for some $i \in N$: since $\rho - 1 < 0$, we would have ${u_i^*}^{\rho - 1} = \infty$, which contradicts $\frac{\partial L}{\partial y} \le 0$. Thus $u_i^* > 0$, so $u_i^* = (\sum_{j \in R_i} \lam_{ij}^*)^{\frac{1}{\rho -1}}$  for all $i \in N$. 

Similarly, $\frac{\partial L}{\partial x_{ij}} (\mathbf{x^*}, \mathbf{u^*}, \boldsymbol{\lam^*}, \mathbf{q}^*)= \lam_{ij}^*- q_j^* = 0$ for every $i \in N$ and $j \in R_i$ with $x_{ij} > 0$. Since $u_i^* > 0$ for all $i \in N$, we must have $x_{ij}^* > 0$ for all $j \in R_i$. Therefore $\lam_{ij}^* = q_j^*$ for all $i \in N, j \in R_i$, so $u_i^* = \big( \sum_{j \in R_i} q_j^* \big)^{\frac{1}{\rho -1}}$. It will also be helpful to consider $\frac{\partial L}{\partial x_{ij}} (\mathbf{x^*}, \mathbf{u^*}, \boldsymbol{\lam^*}, \mathbf{q}^*)$ for $j \not \in R_i$: in this case, we have $\frac{\partial L}{\partial x_{ij}} (\mathbf{x^*}, \mathbf{u^*}, \boldsymbol{\lam^*}, \mathbf{q}^*) = q_j = 0$ whenever $x_{ij}^* > 0$.

Next, the KKT conditions also imply that $(\mathbf{x^*}, \mathbf{u^*}, \boldsymbol{\lam^*}, \mathbf{q^*})$ satisfy \emph{complementary slackness}, meaning that the Lagrange multiplier of any non-tight constraint is equal to 0. We are specifically interested in the constraint $x_{ij}^* \le u_i^*$ for $j \in R_i$: either $\lambda_{ij}^* = q_j^* = 0$, or
\[
x_{ij}^* = u_i^* = \Big( \sum_{j \in R_i} q_j^* \Big)^{\frac{1}{\rho -1}}
\]

\textbf{Step 3: Constructing the price curves.} We now use the Lagrange multipliers $\mathbf{q^*}$ to construct explicit price curves. We define $f_j(x)$ by $f_j(x) = q_j^* x^{1-\rho}$. Since $\rho \in (-\infty, 0)\cup(0,1)$, we have $1-\rho > 0$, so these price curves are in fact increasing. We claim that $(\mathbf{x^*}, \f)$ is a price curve equilibrium. To see this, we explicit compute the cost of agent $i$'s bundle $x_i^*$:
\[
C_\f(x_i^*) =\sum_{\ell \in M} q_\ell^* {(x_{i\ell}^*)}^{1-\rho} =\sum_{\ell: q_\ell, x_{i\ell}\ne 0} q_\ell^* {(x_{i\ell}^*)}^{1-\rho} =
 \sum_{\ell: q_\ell, x_{i\ell}\ne 0} q_\ell^* \Big( \sum_{j \in R_i} q_j^* \Big)^{\frac{1-\rho}{\rho -1}} = \sum_{\ell: q_\ell, x_{i\ell}\ne 0} \frac{q_\ell^*}{\sum\limits_{j \in R_i} q_j^*}
\]
To show that $C_\f(x_i^*) = 1$, we just need to show that $\sum_{j: q_j, x_{ij}\ne 0} q_j^* = \sum_{j \in R_i} q_j^*$. Clearly $\sum_{j: q_j, x_{ij}\ne 0} q_j^* = \sum_{j: x_{ij}\ne 0} q_j^*$. Since $u_i^* > 0$, we have $x_{ij} \ne 0$ for each $j \in R_i$, so $\sum_{j: x_{ij}\ne 0} q_j^* \ge \sum_{j \in R_i} q_j^*$. To show that the reverse inequality holds, it suffices to show that whenever $j \not\in R_i$ and $x_{ij} \ne 0$, $q_j = 0$. This is exactly one of the things we showed via the KKT conditions in Step 2.

Thus we have shown that $C_\f(x_i^*) = 1$, so $x_i^*$ is affordable to agent $i$ for all $i \in N$. Furthermore, since $u_i^* = \Big( \sum_{j \in R_i} q_j^* \Big)^{\frac{1}{\rho -1}}$ is finite, there must exist $j \in R_i$ with $q_j^* > 0$. Thus there is at least one good $j \in R_i$ such that buying more would cost more money, so by Lemma~\ref{lem:leontief-demand}, $x_i^*$ is in agent $i$'s demand set. We also know that $\sum_{j \in M} x_{ij}^* \leq 1$, since $\mathbf{x^*}$ is a feasible solution to the primal. Therefore $(\mathbf{x^*}, \f)$ is a price curve equilibrium.
\end{proof}

The structure of the price curve themselves ($f_j(x) = q_j^* x^{1-\rho}$) is also interesting when we consider the interpretation of the parameter $\rho$: the smaller $\rho$ is, the more we care about agents with small utility. Recall that taking of $\rho \to -\infty$ yields max-min welfare, where we only care about the minimum utility. When $\rho = 1$, we have utilitarian welfare, where we only care about overall efficiency. This roughly corresponds to caring more about agents with higher utility. The limit as $\rho \to 0$ corresponds to Nash welfare, which is a mix of caring about both agents with low utility and those with high utility.

We know that maximum Nash welfare allocations are supported by linear price curves, i.e., those with constant marginal prices. When $\rho < 0$, these marginal prices are increasing, making it easier for agents who are buying less of each good. Since $w_{ij} \in \{0,1\}$, $u_i(x_i) = x_{ij}$ whenever $w_{ij} \ne 0$, so the agents who are buying less are also the ones with lower utility. Thus price curves of this form for $\rho < 0$ are benefiting the agents with low utility. Furthermore, the smaller $\rho$ is, the faster marginal prices grow, which corresponds to favoring agents with low utility even more. On the other hand, when $\rho > 0$, these marginal prices are decreasing. This favors agents with higher utility, which is consistent with the interpretation of the CES welfare function with $\rho > 0$.

\subsection{Decentralized primal-dual updates}\label{sec:decentralized}

In this section, we discuss how the simple structure of these price curves suggests a natural decentralized primal-dual algorithm for computing said price curves. As established by Theorem~\ref{thm:bandwidth}, the price of buying $x$ of good $j$ will be $q_j x^{1-\rho}$. Specifically, for each $j \in M$, $q_j$ be the Lagrange multiplier for the convex program for maximizing CES welfare with respect to that specific value of $\rho$\footnote{In general, different values of $\rho$ will lead to different optimal allocations and Lagrange multipliers.}. 

For bandwidth allocation, each agent's demand set depends only on the prices curves for goods $j \in R_i$, i.e., the goods she cares about. For price curves of the form $f_j(x) = q_j x^{1-\rho}$, this means that each agent's demand given price curves $\f$ depends only on the dual prices $q_j$ for goods $j \in R_i$. This means that given price curves $\f$, agents can update their demands in a decentralized fashion. Furthermore, the price set by each link should depend only on the flow through that link, i.e., $\{x_{ij}: i \in N, j \in R_i\}$. This means that given agent demands, each link can update its dual price $q_j$ in a decentralized way (typically by raising the price if the demand is less than the supply, and increasing the price if the demand exceeds the supply). This suggests a simple decentralized primal-dual algorithm, where on each step, each agent updates her primal allocation $x_i$ in response to the dual prices $q_j$ for $j \in R_i$, and each link updates its dual price in response to primal allocations $x_i$. This is similar to the work of Kelly et al.~\cite{kelly_1998_rate}.

This type of algorithm is also called a \emph{\tat}. One recent approach to \tat makes use of the fact that the equilibrium prices are the Lagrange multipliers in the convex program to maximize Nash welfare, and gives a \tat process that is akin to gradient descent on the dual program~\cite{cheung_tatonnement_2013}. This approach also seems promising for our setting, since $q_1\dots q_m$ are exactly the Lagrange multipliers in the convex program for maximizing CES welfare. We leave this as an open question.

\subsection{A converse to Theorem~\ref{thm:bandwidth}}\label{sec:ces-converse}

In this section, we give a converse of sorts to Theorem~\ref{thm:bandwidth}: if an allocation $\x$ can be supported by price curves $\f$ of the form $f_j(x) = q_j^* x^{1-\rho}$, and the supply is exhausted for any good with nonzero price, then $\x$ must be a maximum CES welfare allocation. The requirement that the supply be exhausted for any good with nonzero price (i.e.,  $\sum_{i \in N} x_{ij} = s_j$ whenever $q_j \ne 0$) is analogous to the second condition in definition of Fisher market (i.e., standard linear pricing) equilibrium given in Section~\ref{sec:fisher-markets}.

The proof of Theorem~\ref{thm:ces-converse} essentially hinges on the fact that when strong duality holds for a convex program, the KKT conditions are sufficient for optimality. This is analogous to the proof of Theorem~\ref{thm:bandwidth}, which is based on the fact that the KKT conditions are necessary for optimality. The formal proof appears in Appendix~\ref{sec:extra-proofs}.

\begin{restatable}{thm}{bandwidthConverse}\label{thm:ces-converse}
Suppose $(\xs, \f)$ is a price curve equilibrium where for all $j \in M$, $f_j(x) = q_j^* x^{1-\rho}$ for $\rho \in (-\infty, 1)$ and nonnegative constants $q_1^*\dots q_m^*$. If $\sum_{i \in N} x^*_{ij} = s_j$ whenever $q_j \ne 0$, then $\xs$ is a maximum CES welfare allocation.
\end{restatable}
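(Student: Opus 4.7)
My plan is to reuse the convex program from the proof of Theorem~\ref{thm:bandwidth} and run the KKT argument in reverse. Because that program is convex and Slater's condition still holds, the KKT conditions are \emph{sufficient} (not merely necessary) for optimality. So I would build dual variables from the given $q_j^*$, exhibit a primal candidate $(\xs,\bfu^*)$ together with multipliers $(\bflam^*,\q^*)$ satisfying every KKT condition, and conclude that $(\xs,\bfu^*)$ maximizes the transformed objective $\tfrac{1}{\rho}\sum_{i\in N} u_i^\rho$. Since this is a monotone transformation of CES welfare, $\xs$ will then be a maximum CES welfare allocation.

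For the construction, define $u_i^* = x_{ij}^*$ for any $j\in R_i$; this is well-defined because bundles are in exact proportion to weights, so $x_{ij}^*=u_i^*$ for $j\in R_i$ and $x_{ij}^*=0$ for $j\notin R_i$. Set $\lam_{ij}^* = q_j^*$ for every $i$ and $j\in R_i$, and keep $\q^*$ as the supply-side multipliers. Primal feasibility is part of the equilibrium definition and dual nonnegativity is given. The key computation is stationarity in $u_i$: the budget-tightness $C_\f(x_i^*) = 1$ supplied by $x_i^*\in D_i(\f)$ (Lemma~\ref{lem:leontief-demand}) simplifies, using the proportionality of $x_i^*$, to $(u_i^*)^{1-\rho}\sum_{j\in R_i} q_j^* = 1$, which is exactly the relation $(u_i^*)^{\rho-1} = \sum_{j\in R_i}\lam_{ij}^*$ demanded by $\partial L/\partial u_i=0$. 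Stationarity in $x_{ij}$ with $j\in R_i$ holds by construction ($\lam_{ij}^*=q_j^*$), and for $j\notin R_i$ it reduces to $-q_j^*\le 0$ with the matching complementary-slackness condition $x_{ij}^*\cdot(-q_j^*)=0$ free because $x_{ij}^*=0$. Complementary slackness on the Leontief constraints is free since all of them are tight, and the supply-side complementary slackness $q_j^*(\sum_i x_{ij}^* - s_j)=0$ is precisely the hypothesis of the theorem.

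The only real obstacle is the edge case $u_i^*=0$: since $\rho<1$, the stationarity relation $(u_i^*)^{\rho-1}=\sum_{j\in R_i}q_j^*$ blows up at $u_i^*=0$, so this case must be excluded before the KKT argument goes through. This is straightforward from the equilibrium itself: if $u_i^*=0$ then $x_{ij}^*=0$ for every $j$, hence $C_\f(x_i^*)=0\neq 1$, contradicting $x_i^*\in D_i(\f)$. So $u_i^*>0$ for every $i$, all KKT conditions hold at $(\xs,\bfu^*,\bflam^*,\q^*)$, and sufficiency of KKT for convex programs with Slater's condition yields the conclusion.
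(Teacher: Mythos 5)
Your proposal is correct in substance and takes essentially the same route as the paper: invoke sufficiency of the KKT conditions for the convex program of Theorem~\ref{thm:bandwidth}, set $\lam_{ij}^*=q_j^*$, derive stationarity in $u_i$ from budget exhaustion $C_\f(x_i^*)=1$, and rule out $u_i^*=0$ exactly as you do. One boundary case is missing, though: the theorem's hypothesis allows $\rho=0$, where your transformed objective $\tfrac{1}{\rho}\sum_i u_i^\rho$ (and the exponent $\rho-1$ manipulations) are still fine, but the CES objective itself degenerates; the paper disposes of $\rho=0$ separately by observing that the price curves become linear and citing the Eisenberg--Gale equivalence between linear-pricing equilibria and maximum Nash welfare, then assumes $\rho\neq 0$ for the KKT argument. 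You should add that one line. A smaller stylistic difference: you obtain complementary slackness on the constraints $u_i\le x_{ij}$ ``for free'' from the standing assumption that bundles are in exact proportion to weights, whereas the paper re-derives $x_{ij}^*=w_{ij}u_i^*$ whenever $q_j^*\neq 0$ from the equilibrium condition (an agent overbuying a priced good could reallocate the savings and increase her utility); either is acceptable given the paper's conventions, but the paper's version is self-contained.
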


\subsection{Unequal budgets}\label{sec:ces-unequal-budgets}

Finally, we address the setting where agents may have different amounts of money to spend. Let $B_i$ be agent $i$'s budget. If we instead consider the \emph{budget-weighted CES welfare} $\Phi_{CES}(\x) = \big(\sum_{i \in N} B_i u_i(x_i)^\rho \big)^{1/\rho}$, then the proof of Theorem~\ref{thm:bandwidth} extends directly. Duality tells us that agent $i$'s utility must be $u_i(x_i) = \Big(\frac{1}{B_i} \sum_{j\in R_i} q_j^* \Big)^{\frac{1}{\rho - 1}}$. By using the same price curve form of $f_j(x) = q_j^* x^{1-\rho}$, we get $C_\f(x_i) = \sum_{\ell \in R_i} \mfrac{B_i q_\ell^*}{\sum\limits_{j \in R_i}q_j^*} = B_i$, so agent $i$ is indeed spending exactly her budget. This can be used to show that any allocation with maximum budget-weighted CES welfare can be supported by price curves.

A social planner may prefer to give the same weight to each agent's utility, even if the budgets are not the same. Unfortunately, allocations with optimal unweighted CES welfare cannot be supported (at least not exactly) when agents have different budgets. To see this, consider two agents with different budgets and a single good: whichever agent has more money must receive a larger portion of the good. But assuming the agents have the same weight for that good (which holds in the bandwidth allocation setting or when weights are normalized somehow), the unweighted CES welfare optimum would give each agent the same amount. This is analogous to the Fisher market setting: the Fisher market equilibria for unequal budgets are exactly the allocations which maximize the budget-weighted Nash welfare.

\section{Conclusion}\label{sec:conclusion}

In this paper, we analyzed price curves in several different settings, focusing on agents with Leontief utilities. Our first main result was that for strictly increasing price curves, an allocation can be supported if and only if it is GDF. We proved this by defining the agent-order matrix, and using duality theorems to show the existence of a strongly positive solution to a particular system of linear equations. Our second main result was that in the bandwidth allocation setting, the maximum CES welfare allocation can be supported by price curves. These price curves took the simple form of $f_j(x) = q_j x^{1-\rho}$. This is contrast to the standard linear pricing setting, where only maximum Nash welfare allocations can be equilibria.

There are many possible directions for future research. The first is the possibility of a simple primal-dual \tat for price curves, as discussed in Section~\ref{sec:decentralized}. We think that the approach of~\cite{cheung_tatonnement_2013} seems especially promising in this regard.

A second possible direction is studying price curves for other classes of agent utilities, and in particular, linear utilities. We will discuss in Appendix~\ref{sec:counterexamples} some of the challenges that linear utilities pose for analyzing price curves, but perhaps everything would fall into place with the right framework. 

Third, future research could consider \emph{quasilinear} utilities. In this paper, we assumed that agents have fixed budgets, and have no value for leftover money (i.e., ``fake money"). In the quasilinear setting (i.e., ``real money"), agents also choose how much to spend, and each agent's utility function is equal to the value she derives from her bundle minus the amount she pays. It would be interesting to see whether our results extend to that setting.

Last but not least, we are intrigued by the connection between GDF and the agent-order matrix and duality theorems, and we wonder if this connection could be useful for other resource allocation problems as well.


\bibliographystyle{plain}
\bibliography{refs}

\section*{Acknowledgements}

This research was supported in part by NSF grant CCF-1637418, 
ONR grant N00014-15-1-2786, and the NSF Graduate Research Fellowship under grant DGE-1656518.

\appendix

\section{Max-min welfare}\label{sec:maxmin}

In this section, we show that under mild assumptions, price curves can support allocations with either optimal max-min welfare, or arbitrarily close to optimal max-min welfare. As before, we assume that agents have Leontief utility functions. Also, we refer to an allocation with optimal max-min welfare as a max-min allocation. 

The first thing we observe is that when agent weights are unconstrained in magnitude, there is no hope to support any approximation of max-min welfare. Consider a single good and two agents with weights $w_{11}$ and $w_{21}$ on that good. In this case, each agent $i$'s utility is just $x_{i1}/w_{i1}$, so the max-min welfare of an allocation $\x$ is $\min(\frac{x_{11}}{w_{11}}, \frac{x_{21}}{w_{21}})$. Now imagine that $w_{11}$ is much larger than $w_{21}$: agent 1 needs significantly more of the good to achieve the same utility as agent 2. Then any max-min allocation (or even any decent approximation) must give more of the good to agent 1 than agent 2. But since agents have the same budgets, any price curve equilibrium must result in each agent receiving half of the supply of good 1, which is a contradiction.

Thus in order to have any hope of even approximately supporting a max-min allocation, the agent weights must be normalized in some way. Theorem~\ref{thm:maxmin-full} states that under a quite general normalization assumption, we can support a max-min allocation.

\begin{theorem}\label{thm:maxmin-full}
Suppose there exist strictly increasing functions $g_1,\dots,g_m$ such that for all $i \in N$, $\sum_{j\in M} g_j(w_{ij}) = 1$. Then there exists a max-min allocation that can be supported by price curves.
\end{theorem}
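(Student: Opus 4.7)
The plan is to construct a specific max-min allocation $\xs$ and then invoke Theorem~\ref{thm:gdf-final} to produce supporting strictly increasing price curves. The natural candidate is the ``equal-utility'' allocation: set $u^* = \min\{ s_j / \sum_{i \in N} w_{ij} : j \in M,\ \sum_i w_{ij} > 0 \}$ and define $\xs$ by $x^*_{ij} = u^* w_{ij}$. This allocation is feasible since $\sum_i x^*_{ij} = u^* \sum_i w_{ij} \le s_j$ by the choice of $u^*$, and it gives every agent utility exactly $u^*$; any allocation with a strictly larger minimum utility would force $\sum_i x_{ij} > s_j$ for the good attaining the minimum in $u^*$, so $\xs$ is indeed max-min optimal. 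In the non-degenerate case $u^* > 0$ (which we may assume, since otherwise some good with positive total demand has zero supply and the theorem is vacuous), every bundle is nonempty and the nonemptiness hypothesis of Theorem~\ref{thm:gdf-final} is satisfied.

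The main work is to show that $\xs$ is GDF, and this is where the normalization hypothesis enters. I would argue by contradiction: suppose $\A \succ \B$ in $\xs$. Substituting $\tau' = \tau/u^*$ and using $x^*_{ij} = u^* w_{ij}$ converts the group-domination condition into: for every good $j$ and every $\tau' \in \bbrpos$, $\sum_{i : w_{ij} \ge \tau'} a_i \ge \sum_{i : w_{ij} \ge \tau'} b_i$, with strict inequality at some $(j^*, \tau'^*)$ with $\tau'^* > 0$. In other words, for each good $j$, the weighted ``distribution'' of $\{w_{ij}\}_i$ under $\A$ stochastically dominates the one under $\B$, strictly for $j = j^*$. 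Invoking the discrete stochastic-dominance inequality (provable via Abel summation and the assumption $\sum_i a_i = \sum_i b_i$), for any strictly increasing $g_j$,
\[
\sum_{i \in N} a_i g_j(w_{ij}) \ge \sum_{i \in N} b_i g_j(w_{ij}),
\]
with strict inequality for $j = j^*$. Summing over $j$ and applying the normalization $\sum_j g_j(w_{ij}) = 1$,
\[
\sum_{i \in N} a_i = \sum_{i \in N} a_i \sum_{j \in M} g_j(w_{ij}) > \sum_{i \in N} b_i \sum_{j \in M} g_j(w_{ij}) = \sum_{i \in N} b_i,
\]
contradicting $\sum_i a_i = \sum_i b_i$. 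Hence $\xs$ admits no group domination, and Theorem~\ref{thm:gdf-final} yields strictly increasing price curves that support $\xs$.

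The main obstacle is the translation from the combinatorial GDF condition to an analytic inequality that interacts cleanly with the normalization. The key insight is that group domination with respect to the equal-utility allocation $x^*_{ij} = u^* w_{ij}$ is exactly stochastic dominance of the weight distributions good by good, at which point the strictly increasing $g_j$'s convert per-good stochastic dominance into per-good expectation inequalities that telescope against $\sum_j g_j(w_{ij}) = 1$. The subtle point is ensuring that the single strict inequality in the stochastic-dominance relation propagates to a strict expectation inequality for $g_{j^*}$; this follows from Abel summation at a positive threshold where the dominance is strict together with the strict monotonicity of $g_{j^*}$.
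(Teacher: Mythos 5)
Your proof is correct, but it takes a genuinely different route from the paper's. You both start from the same canonical max-min allocation $x^*_{ij}=u^*w_{ij}$ and both reduce to showing it is GDF so that Theorem~\ref{thm:gdf-final} applies, but you then prove GDF \emph{directly}: you translate a putative domination $\A\succ\B$ into per-good stochastic dominance of the weight profiles, use Abel summation with the strictly increasing $g_j$ to get $\sum_i a_i g_j(w_{ij})\ge\sum_i b_i g_j(w_{ij})$ (strict for some $j^*$), and sum against the normalization to contradict $\sum_i a_i=\sum_i b_i$. The paper instead observes that the hypothesis $\sum_j g_j(w_{ij})=1$ with $g_j$ strictly increasing is \emph{literally} the statement that the allocation $x'_{ij}=w_{ij}$ is supported by strictly increasing price curves, so the ``only if'' direction of Theorem~\ref{thm:gdf-final} immediately gives that $\mathbf{x'}$ (hence its scaling $\xs$) is GDF, and the ``if'' direction finishes. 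The paper's argument is a two-line application of the characterization theorem in both directions; yours re-derives the easy direction of that theorem in this special case, which makes the proof more self-contained and makes explicit the stochastic-dominance mechanism that the paper only discusses informally in Section~\ref{sec:gdf}. Your Abel-summation step is sound: the boundary term $g_j(v_0)\sum_i a_i$ cancels against the corresponding $\B$ term because the total weights agree, so you do not even need $g_j(0)=0$, and the strict threshold necessarily sits strictly above the smallest weight value, so strictness survives. One small caveat, shared with the paper's own proof: when $u^*=0$ (some demanded good has zero supply) the conclusion is not ``vacuous'' as you claim --- the hypothesis can still hold while no max-min allocation is supportable --- so both arguments implicitly assume positive supplies; this does not distinguish your proof from the paper's.
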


\begin{proof}
Since the max-min welfare of an allocation is determined by the minimum agent utility, the max-min welfare cannot be improved by making any agent's utility higher than any other. Similarly, since each agent's utility is determined by $\min_{j \in M} x_{ij}/w_{ij}$, the max-min welfare cannot be improved by allocating goods to an agent outside of her desired proportions. Thus there exists a max-min allocation $\x$ where all agents have the same utility $u$, and where $x_{ij} = u\cdot w_{ij}$ for all $i \in N$ and $j \in M$.

Since GDF is invariant to scaling by constants, this implies that $\x$ is GDF if and only if \emph{the weight vectors themselves} are GDF. That is, $\x$ is GDF if and only if the allocation $\mathbf{x'}$ defined by $x'_{ij} = w_{ij}$ is GDF. One realizes that the assumption of $\sum_{j\in M} g_j(w_{ij}) = 1$ for all $i \in N$ is \emph{literally assuming that there exist (strictly increasing) price curves} that support the allocation $\mathbf{x'}$. Thus $\mathbf{x'}$ is GDF by Theorem~\ref{thm:gdf-final}, so $\x$ is GDF, which completes the proof.
\end{proof}


One natural corollary of Theorem~\ref{thm:maxmin-full} is the following:

\begin{corollary}\label{cor:maxmin-Lq}
Suppose there exists some $q \geq 1$ so that $\sum_{j \in M} w_{ij}^q = 1$ for all $i \in N$. Then there exists a max-min allocation that can be supported by price curves.
\end{corollary}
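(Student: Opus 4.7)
The plan is to derive Corollary~\ref{cor:maxmin-Lq} as an immediate specialization of Theorem~\ref{thm:maxmin-full}. All that needs to be done is to exhibit a family of strictly increasing functions $g_1, \dots, g_m \colon \bbrpos \to \bbrpos$ satisfying $\sum_{j \in M} g_j(w_{ij}) = 1$ for every agent $i$, and then invoke the theorem directly.

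The natural choice is to set $g_j(x) = x^q$ for every $j \in M$, where $q$ is the exponent supplied by the hypothesis. Since $q \geq 1 > 0$, the map $x \mapsto x^q$ is continuous and strictly increasing on $\bbrpos$, so each $g_j$ qualifies as a strictly increasing function in the sense required by Theorem~\ref{thm:maxmin-full} (and in fact a valid price-curve candidate, since $g_j(0) = 0$). The normalization condition is then exactly the assumption of the corollary, since
\[
\sum_{j \in M} g_j(w_{ij}) \;=\; \sum_{j \in M} w_{ij}^q \;=\; 1
\qquad \text{for all } i \in N.
\]

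With the hypotheses of Theorem~\ref{thm:maxmin-full} verified, the theorem produces a max-min allocation that can be supported by price curves, which is the desired conclusion. There is no real obstacle: the content of the corollary is simply that the $\ell_q$-normalization condition is a concrete instance of the more abstract normalization condition assumed in the theorem, and picking $g_j(x) = x^q$ makes this correspondence explicit. The only thing to double-check is that strict monotonicity holds at the boundary $x = 0$, which is immediate for $q \geq 1$.
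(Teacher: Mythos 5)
Your proof is correct and is exactly the intended argument: the paper states Corollary~\ref{cor:maxmin-Lq} as an immediate consequence of Theorem~\ref{thm:maxmin-full} (with no written proof), and instantiating $g_j(x) = x^q$ is precisely the specialization being invoked. Nothing further is needed.
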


Theorem~\ref{thm:maxmin-full} has an interesting conceptual implication. We can think of price curves themselves as a sort of ``norm" on the allocation, and any allocation for which there is a ``norm" which assigns the same value to each agent's bundle is reasonable enough that it can be supported by price curves. The previous statement can be rephrased as ``an allocation can be supported by price curves if and only if there exist price curves which assign the same cost to each agent's bundle", and so is functionally a tautology. Since there exists a max-min allocation which is a constant scaling of the agent weights, this near-tautology carries over.


One final observation is that there are some interesting norms, such as the $L_\infty$ norm, which cannot be written as the sum of increasing functions. In fact, there are cases where no max-min allocation can be supported when agent weights have the same $L_\infty$ norm.\footnote{The $L_{\infty}$ norm is defined as $\max_{j \in M} w_{ij}$.} Furthermore, the following counterexample falls under the even simpler bandwidth allocation setting: $w_{ij} \in \{0,1\}$ for all $i,j$.

\begin{theorem}\label{thm:maxmin-infty-counterexample}
There exist instances where $w_{ij} \in \{0,1\}$ for all $i\in N$ and $j\in M$, but no max-min allocation can be supported.
\end{theorem}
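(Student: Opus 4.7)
The plan is to exhibit a tiny bandwidth instance whose unique max-min allocation forces a single ``heavy'' agent to consume half of every good while each other agent consumes only a single good. Because all agents have the same budget, the single-good agents pin the price of ``half a good'' to $1$, which forces the heavy agent's bundle to cost $2 > 1$ and rules out any supporting price curves.

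Concretely, I would take $n = 3$, $m = 2$, with supplies $s_1 = s_2 = 1$ and routes $R_1 = \{1\}$, $R_2 = \{2\}$, $R_3 = \{1,2\}$ (equivalently, $w_{11} = w_{22} = w_{31} = w_{32} = 1$ and $w_{12} = w_{21} = 0$). First I would verify that the unique max-min allocation is $x_{11} = x_{22} = x_{31} = x_{32} = \tfrac{1}{2}$, with $x_{12} = x_{21} = 0$ by the in-proportion convention of Section~\ref{sec:fisher-markets}. Writing $\mu = \min_i u_i(x_i)$, feasibility gives $x_{11}, x_{31} \ge \mu$ and $x_{11} + x_{31} \le 1$, so $\mu \le \tfrac{1}{2}$; the symmetric bound holds for good~$2$. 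Both bounds are tight simultaneously only at the claimed allocation, so it is the unique max-min optimum.

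Next I would show that no price curves $\f$ support this allocation. Assume for contradiction that $(\x,\f)$ is a price curve equilibrium. Agent~$1$'s bundle has cost $C_\f(x_1) = f_1(\tfrac{1}{2}) + f_2(0) = f_1(\tfrac{1}{2})$, so part~(a) of Lemma~\ref{lem:leontief-demand} forces $f_1(\tfrac{1}{2}) = 1$, and symmetrically $f_2(\tfrac{1}{2}) = 1$. Then Lemma~\ref{lem:leontief-demand} applied to agent~$3$ requires $C_\f(x_3) = f_1(\tfrac{1}{2}) + f_2(\tfrac{1}{2}) = 2$, contradicting the same condition~(a), which demands $C_\f(x_3) = 1$.

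The main obstacle is really just the uniqueness step: I must rule out \emph{every} other optimum, not merely produce one that fails to be supportable. Fortunately the two capacity constraints together with $\mu = \tfrac{1}{2}$ pin down all four nonzero coordinates (any slack in one coordinate directly reduces another below $\mu$), so uniqueness is automatic. The rest is a one-line application of Lemma~\ref{lem:leontief-demand}, and the argument is agnostic to whether the price curves are strictly or weakly increasing. This matches the intuition signaled by the preceding discussion: the $L_\infty$ ``normalization'' $\max_j w_{ij} = 1$ that this instance satisfies is not a sum of increasing coordinate functions, so Theorem~\ref{thm:maxmin-full} cannot be invoked to salvage support.
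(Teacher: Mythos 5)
Your proposal is correct and is essentially identical to the paper's proof: it uses the very same three-agent, two-good instance (two single-link agents plus one agent wanting both links), derives the same unique max-min allocation $x_{11}=x_{22}=x_{31}=x_{32}=\tfrac12$, and reaches the same contradiction $C_\f(x_3)=f_1(\tfrac12)+f_2(\tfrac12)=2\ne 1$. Your added uniqueness verification is a sound (and welcome) elaboration of a step the paper states without detail.
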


\begin{proof}
Consider an instance with three agents and two goods, each with supply 1. Let the agent weights be given by the following table:
\[
\begin{tabular}{ c|ccc} 
& agent 1 & agent 2 & agent 3\\
\hline
 good 1 & 1 & 0 & 1 \\
good 2 & 0 & 1 & 1\\
\end{tabular}
\]
The unique max-min allocation is $x_{11} = x_{22} = x_{31} = x_{32} = \frac{1}{2}$. Thus any price curves $f_1, f_2$ must satisfy $C_\f(x_1) = f_1(\frac{1}{2}) = 1$,  $C_\f(x_2) = f_2(\frac{1}{2}) = 1$. But then $C_\f(x_3) = f_1(\frac{1}{2}) + f_2(\frac{1}{2}) = 2$, which is a contradiction. Thus no max-min allocation can be supported. 
\end{proof}


The good news is that the $L_\infty$ norm can be approximated to arbitrary precision by $L_q$ norms, leading to the following theorem. We use $\Phi_{MM}(\x) = \min_{i \in N} u_i(x_i)$ to denote the max-min welfare of allocation $\x$.

\begin{restatable}{thm}{maxminInftyApx}\label{thm:maxmin-infty-apx}
Suppose that $\max_{j \in M} w_{ij} = 1$ for all $i \in N$. Then for every $\epsilon > 0$, there exists an allocation $\x$ that can be supported by price curves where $\Phi_{MM}(\x) \geq (1-\epsilon) \max_{\mathbf{x'}} \Phi_{MM}(\mathbf{x'})$.
\end{restatable}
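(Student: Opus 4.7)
The plan is to approximate the $L_\infty$ normalization by an $L_q$ normalization for large $q$, and invoke Corollary~\ref{cor:maxmin-Lq}. For each integer $q \geq 1$, define rescaled weights $\tilde{w}_{ij} = w_{ij}/c_i$ where $c_i = \left(\sum_{j \in M} w_{ij}^q\right)^{1/q}$. By construction $\sum_{j \in M} \tilde{w}_{ij}^q = 1$ for every agent $i$, so Corollary~\ref{cor:maxmin-Lq} applied to the instance with weights $\tilde{w}_{ij}$ yields an allocation $\tilde{\x}$ that maximizes max-min welfare under the rescaled weights together with price curves $\f$ supporting it.

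The crucial observation is that rescaling weights by a positive per-agent constant preserves proportions, so a bundle in proportion to $\tilde{w}_{i \cdot}$ is also in proportion to $w_{i \cdot}$. Moreover, Lemma~\ref{lem:leontief-demand} characterizes membership in a Leontief agent's demand set purely through the cost condition $C_\f(x_i) = 1$ and the existence of a good at which buying more is strictly more expensive; after substituting $\ep' = \ep/c_i$, the latter condition is also invariant under the rescaling. Hence the same curves $\f$ support $\tilde{\x}$ in the original instance.

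It remains to compare max-min welfare in the two instances. Writing $\tilde{u}_i(x_i) = \min_j x_{ij}/\tilde{w}_{ij} = c_i\, u_i(x_i)$, note that $\max_j w_{ij} = 1$ forces $1 \leq c_i \leq m^{1/q}$. Let $\xs$ be a max-min optimum under the original weights with value $u^* = \Phi_{MM}(\xs)$. Then $\tilde{u}_i(\xs_i) = c_i\, u_i(\xs_i) \geq u_i(\xs_i) \geq u^*$ for every $i$, so by optimality of $\tilde{\x}$ in the rescaled instance,
\[
\min_i c_i\, u_i(\tilde{x}_i) \;=\; \min_i \tilde{u}_i(\tilde{x}_i) \;\geq\; \min_i \tilde{u}_i(\xs_i) \;\geq\; u^*.
\]
Combining this with $c_i \leq m^{1/q}$ gives $\Phi_{MM}(\tilde{\x}) = \min_i u_i(\tilde{x}_i) \geq u^*/m^{1/q}$. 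Choosing $q$ large enough that $m^{-1/q} \geq 1-\epsilon$ (i.e., $q \geq \ln m / \ln(1/(1-\epsilon))$) completes the proof.

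The main (mild) obstacle is verifying that the curves $\f$ constructed for agents with weights $\tilde{w}_{ij}$ still constitute an equilibrium under the original weights $w_{ij}$. As argued above, this reduces to the invariance of proportionality and of the demand-set conditions of Lemma~\ref{lem:leontief-demand} under scaling each agent's weights by a positive constant. Everything else is a short calculation comparing $L_q$ and $L_\infty$ norms.
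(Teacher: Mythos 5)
Your proposal is correct and follows essentially the same route as the paper: rescale the weights to be $L_q$-normalized, invoke Corollary~\ref{cor:maxmin-Lq}, and compare the max-min values under the two weightings, letting $q \to \infty$. The only (harmless) differences are that you make the per-agent-scaling invariance of the support property explicit via Lemma~\ref{lem:leontief-demand}, and you replace the paper's limit argument on $\min_i \alpha_i / \max_i \alpha_i$ with the explicit bounds $1 \le c_i \le m^{1/q}$ and a concrete choice of $q$.
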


\begin{proof}
Let $w'_{ij}$ be rescaled versions of $w_{ij}$ so that they are $L_q$-normed for a $q$ to be chosen later. Specifically, let $\alpha_i = (\sum_{j \in M} w_{ij}^q)^{1/q}$, and let $w'_{ij} = w_{ij}/\alpha_i$.

Note that $\sum_{j \in M} w_{ij}'^q = 1$ for all $i \in N$. By Corollary~\ref{cor:maxmin-Lq}, there exists an allocation with optimal max-min welfare with respect to weights $w'_{ij}$ that can be supported by price curves. Let $\x$ be this allocation. Then for all $j \in M$ and all other allocations $\mathbf{x'}$,

\begin{align*}
\min_{i \in N}\ \frac{x_{ij}}{w'_{ij}} \geq&\ \min_{i \in N}\ \frac{x'_{ij}}{w'_{ij}}\\
\min_{i \in N}\ \frac{\alpha_i x_{ij}}{w_{ij}} \geq&\ \min_{i \in N}\ \frac{\alpha_i x'_{ij}}{w_{ij}}\\
\min_{i \in N} \alpha_i u_i(x_i) \ge&\ \min_{i \in N} \alpha_i u_i(x_i')
\end{align*}

In particular, let $\mathbf{x^*}$ be the allocation maximizing max-min welfare with the respect to the true weights $w_{ij}$: then $\min_{i \in N} \alpha_i u_i(x_i) \ge \min_{i \in N} \alpha_i u_i(x_i^*)$. Since $u_i(x_i^*) \geq \Phi_{MM}(\mathbf{x^*})$ by definition, we have $\min_{i \in N} \alpha_i u_i(x_i) \ge \Phi(\mathbf{x^*}) \min_{i \in N} \alpha_i$.

Therefore for all $k \in N$, $\alpha_k u_k(x_k) \ge \Phi(\mathbf{x^*}) \min_{i \in N} \alpha_i$. Therefore
\[
u_k(x_k) \ge \Phi(\mathbf{x^*}) \frac{\min_{i \in N} \alpha_i}{\alpha_k}
\]
and so
\[
\Phi_{MM}(\x) \ge \Phi(\mathbf{x^*}) \frac{\min_{i \in N} \alpha_i}{\max_{i \in N} \alpha_i}
\]

It remains to show that there exists $q \geq 1$ such that $\frac{\min_{i \in N} \alpha_i}{\max_{i \in N} \alpha_i} \ge 1 - \epsilon$. This follows from the fact that $\lim\limits_{q\to \infty} \alpha_i = (\sum_{j \in M} w_{ij}^q)^{1/q} = 1$ for all $i \in N$, which completes the proof.
\end{proof}


\section{Characterization of allocations supported by weakly increasing price curves}\label{sec:laf}

In Section~\ref{sec:gdf}, we showed that an allocation can be supported with \emph{strictly} increasing price curves if and only it is GDF. In this section, we provide the analogous necessary and sufficient condition for the case where any (continuous, weakly increasing) price curves are permitted. This boils down to what we called \emph{locked-agent-freeness} (LAF). LAF is not a particularly interesting condition on its own -- though as with GDF it implies a polynomial time algorithm for finding price curves -- but it is crucial in allowing us to prove that maximum CES welfare allocations can be supported.


For an allocation $\x$, we wish to determine whether there exist price curves $\f$ such that $(\x, \f)$ is a price curve equilibrium. Assuming $\x$ obeys the supply constraints, we just need to determine whether there exist price curves $\f$ such that $x_i \in D_i(\f)$ for all $i \in N$.

%
%

Recall that $\A\succ\B$ if for all $j \in M$ and $\tau\in\bbrpos$, $\sum_{i\in N: x_{ij} \ge \tau} (a_i - b_i) \ge 0$, and there exists a $(j,\tau)$ pair such that the inequality is strict. As discussed in Section~\ref{sec:gdf}, this implies that the aggregate spending of $\A$ is at least that of $\B$ for any $\f$, i.e.,
\[
\sum_{i \in N} (a_i - b_i) C_\f(x_i) \ge 0
\]
for any price curves $\f$. Furthermore, we argued that for strictly increasing $\f$, the inequality is strict, so $\B$ cannot be made to pay as much as $\A$. When we allow weakly increasing price curves, $\A \succ \B$ simply implies that, for any marginal price where $\A$ would have to pay strictly more than $\B$, that marginal price must be zero.



We still need to ensure that $x_i \in D_i(\f) \ \forall i \in N$, i.e., that every agent spends her full budget and cannot get more utility for free (Lemma~\ref{lem:leontief-demand}). This requirement can be expressed by \emph{locked-agent-freeness}.

\begin{definition}[Locked-agent-free (LAF)]\label{def:agent-locked}
For simplicity, we define two meanings of ``locked'':
\begin{itemize}
\item Agent $i$ is locked in an allocation $\x$ if there exists a domination $\A \succ \B$ such that for all $j \in M$ where $x_{ij} > 0$, and all sufficiently small $\ep > 0$, $\A \succ \B$ is strict at $(j, x_{ij} + \ep)$.
\item The \emph{allocation} is locked if there exists $\A \succ \B$ which is strict at \emph{every} $(j,\tau)$ for $\tau \in (0, \max_{i} x_{ij}]$.
\end{itemize}
If nothing is locked in allocation $\x$, we say that $\x$ is locked-agent-free (LAF).
\end{definition}

Intuitively, an agent being locked implies that the cost to increase her allocation must be zero, which will violate condition (b) of Lemma~\ref{lem:leontief-demand}. The \emph{allocation} being locked implies that all marginal prices must be zero, and thus all price curves must be identically zero. Clearly, any non-LAF allocation cannot be supported by price curves. Perhaps surprisingly, the opposite directly holds as well, as stated by Theorem~\ref{thm:laf}.

The proof of Theorem~\ref{thm:laf} is similar to the proof of Theorem~\ref{thm:gdf-final} for strictly increasing price curves. The main difference is that strictly increasing price curves trivially satisfy condition (b) of Lemma~\ref{lem:leontief-demand}, preventing any agent from getting more utility for free. For weakly increasing price curves, however, we need to add a constraint specifically to ensure that condition is satisfied. Thus in addition to the agent-order matrix, we will define a \emph{marginal-cost matrix} to ensure that no agent has a marginal cost of zero to increase her utility. In order to incorporate this matrix, we use a more general duality result than Lemma~\ref{lem:matrix} (although still equivalent to Farkas's Lemma \cite{perng_2017}), this one due to Motzkin. Recall that $\mathbf{v} > \mathbf{0}$ denotes a strictly positive vector, and $\mathbf{v} \gg \mathbf{0}$ strongly positive.

\begin{lemma}[1.6.1 in \cite{stoer_1970_convexity}]\label{lem:transposition}
For matrices $A,B,C$ over $\mathbb{R}$, the following are equivalent.
\begin{enumerate}
\item $A \y = \mathbf{0}, B \y \geq \mathbf{0}, C \y \gg \mathbf{0}$ has no solution
\item $A^{T} \mathbf{u} + B^T \mathbf{v} + C^T \mathbf{w} = \mathbf{0}, \mathbf{v} \geq \mathbf{0},\mathbf{w} > \mathbf{0}$ has a solution
\end{enumerate}
\end{lemma}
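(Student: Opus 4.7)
The plan is to derive Motzkin's transposition theorem from the standard Farkas Lemma, handling the two directions of the equivalence separately.

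First I would establish the ``easy'' direction: the two systems cannot simultaneously have solutions. Given $\y$ satisfying (1) and $(\mathbf{u},\mathbf{v},\mathbf{w})$ satisfying (2), take the inner product of $\y$ with the identity $A^T\mathbf{u} + B^T\mathbf{v} + C^T\mathbf{w} = \mathbf{0}$ to obtain
\[
0 \;=\; \mathbf{u}^T(A\y) + \mathbf{v}^T(B\y) + \mathbf{w}^T(C\y).
\]
The first term vanishes because $A\y = \mathbf{0}$; the second is nonnegative because $\mathbf{v} \geq \mathbf{0}$ and $B\y \geq \mathbf{0}$; and the third is strictly positive, because $C\y$ is strongly positive while $\mathbf{w}$ is nonnegative with at least one strictly positive coordinate. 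This forces $0 > 0$, a contradiction.

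For the hard direction, I would assume (1) has no solution and construct a witness for (2). The key maneuver is to reformulate the strict inequality so that it is amenable to Farkas. Specifically, I claim (1) is infeasible if and only if the \emph{nonstrict} system $A\y = \mathbf{0},\; B\y \geq \mathbf{0},\; C\y \geq \mathbf{1}$ is infeasible, where $\mathbf{1}$ denotes the all-ones vector. One direction is immediate; the other is a rescaling argument, since if $\y$ solves (1) then $\y/\min_i (C\y)_i$ solves the nonstrict version. I would then rewrite the nonstrict system in the canonical form $M\y \leq \mathbf{b}$ (splitting $A\y = \mathbf{0}$ into two inequalities and negating $B\y \geq \mathbf{0}$ and $C\y \geq \mathbf{1}$) and invoke the standard Farkas Lemma. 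This yields nonnegative multipliers $(\mathbf{p},\mathbf{q},\mathbf{r},\mathbf{s}) \geq \mathbf{0}$ with
\[
A^T(\mathbf{p}-\mathbf{q}) - B^T\mathbf{r} - C^T\mathbf{s} = \mathbf{0} \quad\text{and}\quad \mathbf{1}^T\mathbf{s} > 0.
\]
Setting $\mathbf{u} = \mathbf{q}-\mathbf{p}$ (sign-free), $\mathbf{v} = \mathbf{r}$, and $\mathbf{w} = \mathbf{s}$ reproduces the system in (2), and the bound $\mathbf{1}^T\mathbf{w} > 0$ combined with $\mathbf{w} \geq \mathbf{0}$ is exactly the ``strictly positive'' condition $\mathbf{w} > \mathbf{0}$ defined in the paper.

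The main obstacle is the strict inequality $C\y \gg \mathbf{0}$: a naive dualization of (1) would give only $\mathbf{w} \geq \mathbf{0}$, not the crucial $\mathbf{w} > \mathbf{0}$. The rescaling-to-$\mathbf{1}$ reformulation is what does the real work here, since it is precisely the right-hand side $-\mathbf{1}$ appearing in the Farkas certificate that forces $\mathbf{1}^T\mathbf{w} > 0$. Once that reduction is in hand, the remainder is routine sign bookkeeping when translating the Farkas infeasibility certificate back into the form required by condition (2).
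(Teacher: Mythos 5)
The paper does not actually prove this lemma --- it is imported as Theorem~1.6.1 of Stoer and Witzgall (Motzkin's transposition theorem), so there is no in-paper argument to compare against. Your derivation is correct and is essentially the standard textbook route to this result: the easy direction by pairing $\y$ against the vanishing combination $A^T\mathbf{u}+B^T\mathbf{v}+C^T\mathbf{w}$ (the three terms are $0$, $\ge 0$, and $>0$ respectively, since $\mathbf{w}$ has a positive coordinate and every coordinate of $C\y$ is positive), and the hard direction by exploiting homogeneity of the constraints $A\y=\mathbf{0}$, $B\y\ge\mathbf{0}$ to replace $C\y\gg\mathbf{0}$ with $C\y\ge\mathbf{1}$ before dualizing. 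That normalization is indeed the crux: it is the right-hand side $-\mathbf{1}$ in the resulting Farkas certificate that yields $\mathbf{1}^T\mathbf{s}>0$, which together with $\mathbf{s}\ge\mathbf{0}$ is exactly the condition $\mathbf{w}>\mathbf{0}$ as the paper defines strict positivity. Your final sign bookkeeping also checks out: $A^T(\mathbf{q}-\mathbf{p})+B^T\mathbf{r}+C^T\mathbf{s}=\mathbf{0}$ with $\mathbf{r},\mathbf{s}\ge\mathbf{0}$ is the system in~(2). The only point your rescaling step glosses over is the degenerate case where $C$ has no rows (the minimum is then over an empty set), but there the equivalence holds trivially and the case never arises in the paper's applications.
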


\begin{theorem}\label{thm:laf}
Let $\x$ be an allocation which obeys the supply constraints and gives a nonempty bundle to at least one agent. Then $\x$ can be supported by weakly increasing price curves if and only if it is LAF.
\end{theorem}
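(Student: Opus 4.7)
My proof closely follows the blueprint of Theorem~\ref{thm:gdf-final}, adapted to the weakly increasing setting. Two modifications are needed. First, the sign condition on $\y$ is relaxed from $\y \gg \mathbf{0}$ to $\y \geq \mathbf{0}$, so that flat segments in the price curves are allowed. Second, because condition (b) of Lemma~\ref{lem:leontief-demand}---that some marginal cost must be strictly positive for each agent---was automatic in the strictly increasing case but can now fail, we encode it via an additional matrix. These extra constraints take us beyond Stiemke's lemma, so the duality step is performed via Lemma~\ref{lem:transposition} (Motzkin's transposition theorem).

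I would set up the agent-order matrix $A$ exactly as in Theorem~\ref{thm:gdf-final}. In addition, define the \emph{marginal-cost matrix} $C$ with one row per agent by $C_{i,(j,q)} = 1$ iff $w_{ij} > 0$ and $\tau_j^{q-1} = x_{ij}$ (using $\tau_j^0 := 0$), with the last column of $C$ set to zero. Row $i$ of $C$ thus flags exactly the price-curve segments sitting immediately above agent $i$'s allocation on each of her desired goods. Arguing as in Theorem~\ref{thm:gdf-final}, weakly increasing price curves supporting $\x$ correspond to solutions $\y$ of $A\y = \mathbf{0}$, $\y \geq \mathbf{0}$, $C\y \gg \mathbf{0}$ (with the last entry of $\y$ normalized to $1$): the first forces each agent to spend her full budget, the second makes the associated piecewise-linear curves weakly increasing, and the third enforces condition (b) of Lemma~\ref{lem:leontief-demand} for every agent.

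Apply Lemma~\ref{lem:transposition}: this system has no solution iff there exist $\alpha \in \bbr^n$, $\beta \geq \mathbf{0}$, and $\gamma \in \bbr^n$ with $\gamma > \mathbf{0}$ satisfying $A^T \alpha + \beta + C^T \gamma = \mathbf{0}$. Letting $\z = -\alpha$ and writing $\z = \A - \B$ with $\A, \B \geq \mathbf{0}$ disjointly supported, the budget column yields $\sum_i (a_i - b_i) \leq 0$, while every non-budget column $(j,q)$ yields $\sum_{i : x_{ij} \geq \tau_j^q}(a_i - b_i) = \beta_{(j,q)} + (C^T \gamma)_{(j,q)} \geq 0$. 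Inflating $\A$ to equalize total weights (which only strengthens the non-budget inequalities) produces $\A \succ \B$, strict at exactly the columns where $\beta_{(j,q)} + (C^T \gamma)_{(j,q)} > 0$. Since $\gamma > \mathbf{0}$ forces $(C^T \gamma)_{(j,q)} > 0$ at the ``just above $x_{ij}$'' positions for each agent $i$ with $\gamma_i > 0$, every such agent is locked per Definition~\ref{def:agent-locked}; if instead $\beta$ is positive at every non-budget column, then $\A \succ \B$ is strict at every threshold and the allocation is locked. In either case $\x$ is not LAF.

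For the converse, from a locked agent $i^*$ witnessed by some $\A \succ \B$ one sets $\alpha = \B - \A$, chooses $\gamma_{i^*}$ small and positive with all other $\gamma_i = 0$, and defines $\beta = -A^T \alpha - C^T \gamma$; the locking hypothesis ensures $\beta \geq \mathbf{0}$. The allocation-locked case is handled analogously, using either $\gamma \gg \mathbf{0}$ or $\beta$ alone to cover the strict columns. The main technical obstacle is verifying that the support of $C^T \gamma$ lines up exactly with the strict positions required by Definition~\ref{def:agent-locked}, together with the edge cases: agents with $x_{ij} = 0$ for some $j$ with $w_{ij} > 0$ (where the relevant marginal cost lives in the very first segment of the block for good $j$), and the allocation-locked case in which certain strict columns cannot be witnessed by $C^T \gamma$ and must be certified by $\beta$. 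Checking that inflating $\A$ during weight equalization never destroys a required strict inequality---which holds because such inflation can only increase the relevant column sums---completes the argument.
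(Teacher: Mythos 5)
Your overall architecture matches the paper's: the agent-order matrix $A$, an identity block enforcing $\y \geq \mathbf{0}$, a marginal-cost matrix $C$ enforcing condition (b) of Lemma~\ref{lem:leontief-demand}, and Motzkin's transposition theorem (Lemma~\ref{lem:transposition}) in place of Stiemke. However, your definition of $C$ has a genuine gap: it gives no special treatment to an agent who receives the \emph{largest} allocation of some good she desires. For such a good $j$, there is no column $q$ with $\tau_j^{q-1} = x_{ij}$ (the relevant marginal cost lives \emph{beyond} the last breakpoint $\tau_j^{|X_j|}$ and is not represented in the matrix), so your row $C_i$ carries no $1$ for that good. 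If agent $i$ tops every good she desires, her row in $C$ is identically zero, $C_i\y > 0$ is unsatisfiable, and your system declares the allocation unsupportable. Already the instance with one agent, one good, and $x_{11} = 0.5$ breaks both directions: your primal system is infeasible and your dual system is trivially feasible (take $\alpha = 0$, $\beta = 0$, $\gamma = e_1$), yet $f_1(x) = 2x$ supports the allocation and no group domination exists, so nothing is locked. The dual direction fails for the same reason: Definition~\ref{def:agent-locked} requires strictness at $(j, x_{ij}+\ep)$ for \emph{all} $j$ with $x_{ij} > 0$, which is impossible at a good where agent $i$ receives the maximum (both threshold sums vanish above $\max_k x_{kj}$), so your certificate ``strict at the positions flagged by $C_i$'' does not establish that agent $i$ is locked.

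The fix is the one the paper uses: if agent $i$ receives the largest amount of some good (equivalently, row $i$ of $A$ has a $1$ in the last column of some sub-block), set her entire row of $C$ to $\mathbf{1}$. This makes $C_i\y > 0$ vacuously satisfiable (since the last entry of $\y$ is normalized to $1$), reflecting the fact that her marginal cost can always be made positive by extending the price curve strictly beyond the last allocation point without affecting anyone's expenditure; and on the dual side it forces the corresponding domination to be strict at \emph{every} column, which is exactly the allocation-locked case of Definition~\ref{def:agent-locked}. Your case split (``some $\gamma_i>0$ gives a locked agent'' versus ``$\beta$ positive everywhere gives a locked allocation'') then needs to be reorganized around whether the agent carrying the positive weight in $\gamma$ has an all-ones row in $C$ or not. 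You correctly anticipated the $x_{ij}=0$ boundary, but the $x_{ij}=\max$ boundary is the one that actually changes the matrix and the theorem's truth value, so it cannot be deferred to ``edge cases to verify.''
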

\begin{proof}
Recall that an allocation $\x$ is supported by price curves $\f$ if $x_i \in D_i(\f) \ \forall i \in N$, and $\sum_{i \in N} x_{ij} \leq 1 \ \forall j \in M$. The latter condition is satisfied by assumption, and by Lemma~\ref{lem:leontief-demand}, for Leontief utilities, the former condition holds if and only if the cost $C_\f(x_i) = 1$ and there exists $j \in M$ such that $\forall \ep > 0 \ f_j(x_{ij} + \ep w_{ij}) > f_j(x_{ij})$.

As before, let $X_j = \{x_{ij} \mid i \in N\} \setminus \{0\}$ be the set of distinct, non-zero amounts of good $j$ allocated to some agent under $\x$. Label these elements such that $\tau_j^1 < \tau_j^2 < \cdots  < \tau_j^{|X_j|}$. Since $f_j(0) = 0$, $f_j(x \notin X_j)$ in some sense doesn't matter -- we only require that these ``in-between'' areas of the price curve are weakly increasing and don't violate continuity. Thus there exist price curves $\f$ supporting $\x$ if and only if there exist functions $f_j': X_j \to \bbrpos$ such that 
\begin{enumerate}
\item for all $j \in M$, $0 \leq f_j'(\tau_j^1) \leq f_j'(\tau_j^2) \leq \cdots \leq f_j'(\tau_j^{|X_j|})$ (weakly increasing)
\item for all $i \in N$, $C_\f(x_i) = \sum_{j} f_j'(x_{ij}) = 1$ (total cost 1)
\item for all $i \in N$, exists $r,j \in M$ such that $f_j'(\tau_j^r = x_{ij} \neq 0) < f_j'(\tau_j^{r+1})$ (positive marginal cost)
\end{enumerate}

Now we are ready to set up the matrices $A,B,C$ (all of width $\sum_j |X_j| + 1$) to which we will apply Lemma~\ref{lem:transposition}. As in the proof of Theorem~\ref{thm:gdf-final}, $A$ will be the agent-order matrix, and the solution vector $\y$ will represent the marginal prices, with the last entry representing the total cost per agent. Thus, define
$$A\left[i, \ \sum_{\ell < j} |X_\ell| + q\right] = 
\begin{cases}
-1 &\text{if } j=m+1, q=1 \text{ (last column)} \\
\phantom{-}0 &\text{if } x_{ij} < \tau_j^{q} \\
\phantom{-}1 &\text{otherwise}
\end{cases}$$
\begin{figure}
\centering
\begin{subfigure}[b]{.45\textwidth}
\centering
{\setlength{\tabcolsep}{6pt}
\begin{tabular}{cccc:c:cc:ccc}
\ldelim[{4}{15pt}\hspace{-14pt} & 1 & 1 & 1 & 0 & 1 & 0 & -1 & \hspace{-10pt}\rdelim]{4}{15pt} \\
& 1 & 1 & 0 & 0 & 0 & 0 & -1 & \\
& 1 & 0 & 0 & 1 & 1 & 1 & -1 & \\
& 0 & 0 & 0 & 0 & 1 & 1 & -1 & \\
\end{tabular}
}
\caption{$\x$ represented as a agent-order matrix $A$}
\end{subfigure}
\hfill
\begin{subfigure}[b]{.45\textwidth}
\centering
{\setlength{\tabcolsep}{6pt}
\begin{tabular}{cccc:c:cc:ccc}
\ldelim[{4}{15pt}\hspace{-14pt} & 1 & 1 & 1 & 1 & 1 & 1 & 1 & \hspace{-10pt}\rdelim]{4}{15pt} \\
& 0 & 0 & 1 & 0 & 0 & 0 & 0 & \\
& 1 & 1 & 1 & 1 & 1 & 1 & 1 & \\
& 1 & 1 & 1 & 1 & 1 & 1 & 1 & \\
\end{tabular}
}
\caption{the corresponding marginal-cost matrix $C$}
\end{subfigure}
\caption{Example construction of the marginal-cost matrix from an agent-order matrix.}\label{fig:laf_intuition}
\end{figure}
Furthermore, let $B$ be the square identity matrix $I$; this will ensure that the prices are weakly increasing. Finally, we need to define the marginal-cost matrix $C$. As shown in Figure~\ref{fig:laf_intuition}, we can create $C$ based only on $A$: If agent $i$ receives the largest amount of some good (row $i$ has a 1 in the last column of some sub-block), then agent $i$'s row in $C$ is all 1's. Intuitively, we can set the price above $\max_i x_{ij}$ arbitrarily to ensure $i$ has positive marginal cost, so it should be trivial to satisfy $C_i \y > 0$. Otherwise, agent $i$'s row is all zeros, except that within a sub-block if there is a 1 followed by a 0 in row $i$ in $A$, the position of that 0 becomes a 1 in $C$. Intuitively, these are the places $i$ would have to buy more of a good to increase her utility. Formally, define

$$C\left[i, \ \sum_{\ell < j} |X_\ell| + q\right] = 
\begin{cases}
1 &\text{if } \exists j' \ A\left[i, \sum_{\ell \leq j'} |X_\ell|\right] = 1 \\
1 &\text{if } q \geq 1, A\left[i, \ \sum_{\ell < j} |X_\ell| + q\right] = 0, A\left[i, \ \sum_{\ell < j} |X_\ell| + q - 1\right]=1\\
0 &\text{otherwise}
\end{cases}$$

Since $\x$ gives at least one agent a nonempty bundle by assumption, $A,B,C$ have at least two columns. We know by Lemma~\ref{lem:transposition} that $\exists \y$ such that $A \y = \mathbf{0}, B \y \geq \mathbf{0}, C \y \gg \mathbf{0}$ if and only if $\not \exists \mathbf{u}, \mathbf{v}, \mathbf{w}$ such that $A^T \mathbf{u} + B^T \mathbf{v} + C^T\mathbf{w} = \mathbf{0}, \mathbf{v} \geq \mathbf{0}, \mathbf{w} > \mathbf{0}$. To complete the proof, we will show that the former condition is equivalent to the existence of weakly increasing price curves supporting $\x$, and the latter is equivalent to either $\x$ or an agent $i$ being locked.

Define $f_j'(\tau_j^q) - f_j'(\tau_j^{q-1}) = y_{\sum_{\ell < j} |X_\ell| + q}$, where for convenience we let $f_j'(\tau_j^0) = f_j'(0) = 0$. Clearly $B \y = \y \geq \mathbf{0}$ is equivalent to the requirement that price curves be weakly increasing. Furthermore, note that $C \y \gg \mathbf{0}$ implies $\y > 0$, so without loss of generality we can assume the last entry of $\y$ is 1. Thus as before, $A \y = 0$ is equivalent to the requirement that every agent's total cost equals 1. Revisiting $C \y \gg \mathbf{0}$, since $\y > \mathbf{0}$ this is trivially satisfied for every row where agent $i$ receives the largest amount of some good -- equivalently, agent $i$'s marginal cost can trivially be made positive. Additionally, for all other agents, $C_i \y > 0$ is by definition equivalent to having positive marginal cost. Thus a solution vector $\y$ is equivalent to weakly increasing price curves supporting $\x$.

If no such solution exists, then we have $A^T \mathbf{u} + B^T \mathbf{v} + C^T\mathbf{w} = \mathbf{0}, \mathbf{v} \geq \mathbf{0},\mathbf{w} > \mathbf{0}$. Rearranging, and since $B = I$, this is equivalent to $A^T \mathbf{u} \geq C^T \mathbf{w}, \mathbf{w} > \mathbf{0}$. Without loss of generality, assume $\mathbf{w}$ is only non-zero on entry $i$. Furthermore, for all $k$ define $a_k = u_k$ if $u_k > 0$ and $b_k = - u_k$ if $u_k < 0$. Then $A^T \mathbf{u} \geq C^T \mathbf{w}$ is equivalent to $\A \succ \B$ such that the domination is strict wherever $C_i$ is non-zero. If $C_i = \mathbf{1}$, this is equivalent to allocation $\x$ being locked. Otherwise, this is equivalent to agent $i$ being locked. Thus $A^T \mathbf{u} \geq C^T \mathbf{w}, \mathbf{w} > \mathbf{0}$ is equivalent to something being locked in $\x$.
\end{proof}

Finally, we observe that LAF give us the following linear program, which computes price curves (or shows that none exist) in polynomial time.

\begin{theorem}\label{thm:laf-lp}
Given a set of agents $N$, goods $M$, and an allocation $\x \in \bbr_{\ge 0}^{n \times m}$, let $A$ be the corresponding agent-order matrix and $C$ the marginal-cost matrix. In the following linear program, the optimal objective value is strictly positive if and only if there exist strictly increasing price curves supporting $\x$, in which case $\y$ defines such curves.
\begin{align*}
\max_{\y, \eta}\  \eta \quad \quad &\ \\
s.t. \quad A\y =&\ \mathbf{0}\\
	y_k \ge&\ 0\quad \forall k\\
    C_i \y \geq&\ \eta\quad \forall i\\
	y_{-1} =&\ 1
\end{align*}
\end{theorem}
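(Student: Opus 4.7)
The plan is to exhibit a bijection between feasible LP solutions with $\eta > 0$ and price curves supporting $\x$, paralleling the proof of Theorem~\ref{thm:laf}. The LP is precisely the linearization of the conditions derived there: $A\y = \mathbf{0}$ together with $y_{-1}=1$ enforces that every agent's bundle costs exactly her budget of $1$; $y_k \geq 0$ enforces monotonicity of the piecewise-linear curve defined by $\y$; and $C_i\y \geq \eta > 0$ enforces, for each agent $i$, that some good in her support has positive marginal cost (the condition (b) in Lemma~\ref{lem:leontief-demand}).

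For the reverse direction, I would assume $\x$ is supported by strictly increasing price curves $\f$ and extract $\y$ by setting $y_{\sum_{\ell < j}|X_\ell| + q} = f_j(\tau_j^q) - f_j(\tau_j^{q-1})$ with $y_{-1} = 1$. Strict monotonicity of $\f$ gives $y_k > 0$ for every coordinate; each agent spending exactly $1$ gives $A\y = \mathbf{0}$; and Lemma~\ref{lem:leontief-demand}(b), unpacked through the definition of $C$, gives $C_i\y > 0$ for every $i$. Setting $\eta^* = \min_i C_i\y > 0$ then yields a feasible LP solution with strictly positive objective.

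For the forward direction, I would take a feasible $\y^*$ with $\eta^* > 0$ and construct strictly increasing price curves $\f$ in three pieces: (i) continuous interpolation between the breakpoints $\tau_j^1 < \cdots < \tau_j^{|X_j|}$ whose heights are the cumulative partial sums of $\y^*$'s entries within the $j$-th block, so that $A\y^* = \mathbf{0}$ delivers $C_\f(x_i) = 1$ for every $i$; (ii) a strictly increasing extension of each $f_j$ above $\max_i x_{ij}$, which can take any slope since no agent is affected there; and (iii) whenever a jump $y_k^* = 0$ would leave a flat section between two consecutive breakpoints, a small local redistribution that turns the zero jump into an arbitrarily small strictly positive jump while preserving the cumulative-sum heights at all breakpoints. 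The bound $C_i\y^* \geq \eta^*$ guarantees that after this construction each agent still has a good with positive marginal cost at $x_i$, so Lemma~\ref{lem:leontief-demand} yields $x_i \in D_i(\f)$ for every $i$.

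The main obstacle is carrying out the perturbation in piece (iii) without breaking either the budget equality $C_\f(x_i) = 1$ or the positive-marginal-cost requirement; the argument will rely on choosing compensating ``bubble'' perturbations within each good's block so that the signed dot products against the rows of $A$ remain zero and the relevant entries of $C_i\y$ stay positive, exploiting the slack provided by $\eta^* > 0$. Polynomial-time solvability is then immediate since $A$ and $C$ each have $n$ rows and $O(nm)$ columns, so the LP has polynomial size and is solvable by standard methods; the $\y$ that attains the optimum directly yields the price curves through the construction above.
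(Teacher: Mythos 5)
Your overall plan---linearize the system from Theorem~\ref{thm:laf} by replacing $C_i\y > \mathbf{0}$ with $C_i\y \ge \eta$, normalize $y_{-1}=1$, and convert an optimal $\y$ with $\eta>0$ into piecewise-linear curves---is exactly the paper's proof, and your reverse direction (extracting $\y$ from supporting curves and setting $\eta^* = \min_i C_i\y$) is fine. The fatal problem is piece (iii) of your forward direction, and it stems from a misreading of what this LP characterizes. Despite the word ``strictly'' in the theorem statement (evidently a typo carried over from Theorem~\ref{thm:gdf-lp}), this LP is the computational companion of Theorem~\ref{thm:laf} and characterizes \emph{weakly} increasing price curves: the constraint is only $y_k \ge 0$, and zero jumps must be realized as genuinely flat sections of the curves, not perturbed away. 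The entire purpose of the marginal-cost matrix $C$ is to guarantee condition (b) of Lemma~\ref{lem:leontief-demand} \emph{in the presence of} flat sections: $C_i\y \ge \eta > 0$ says that either agent $i$ holds the maximum of some good (the all-ones row, handled by your piece (ii), extending the curve strictly upward beyond the maximum) or some jump to the next breakpoint on a good she buys is strictly positive.

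The perturbation in piece (iii) is impossible in general, not merely delicate. First, it is internally contradictory as stated: if you preserve the cumulative heights $f_j(\tau_j^q)$ at all breakpoints and two consecutive heights are equal, then any increasing curve between them is forced to be flat, so the zero jump cannot become positive. Second, no compensating ``bubble'' can rescue it: in the instance of Example~\ref{ex:need-zero-prices} (agent 1 buys $1/2$ of goods 1 and 2, agent 2 buys $1/2$ of good 1 only), the constraints $A\y = \mathbf{0}$ and $y_{-1}=1$ determine $\y$ uniquely, with the entry for good 2 equal to zero, while both rows of $C$ are all ones, so the LP value is strictly positive; yet agent 1's bundle group-dominates agent 2's, so by Theorem~\ref{thm:gdf-final} no strictly increasing price curves support this allocation. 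Hence the statement your construction would establish---LP value positive iff \emph{strictly} increasing supporting curves exist---is false. If you drop piece (iii), leave the zero jumps flat, and target weakly increasing curves, using $C_i\y^* \ge \eta^* > 0$ exactly as you already do to verify Lemma~\ref{lem:leontief-demand}(b), your argument collapses to the paper's proof.
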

\begin{proof}
As per the proof of Theorem~\ref{thm:laf}, there exist strictly increasing price curves supporting $\x$ if and only if there is a solution to the system $A \y = \mathbf{0}, \y \geq 0, C \y \gg \mathbf{0}$. To turn this into a valid linear program, we replace the strict inequality $C_i \y > 0$ with $C_i \y \geq \eta$ and attempt to maximize $\eta$. Furthermore, we restrict the final entry of $\y$ as $\y_{-1} = 1$, since otherwise $\y$ can be scaled arbitrarily. If there is a solution with $\eta > 0$, then $\y$ corresponds to price curves as before, with each entry representing the difference in price between adjacent allocation amounts. These points simply need to be connected, e.g., in a piecewise linear fashion, to constitute valid price curves.
\end{proof}

\section{Counterexamples}\label{sec:counterexamples}

\begin{example}[H]

\centering
\begin{tabular}{ c|cc} 
& agent 1 & agent 2\\
\hline
 good 1 & 1 & 1  \\
good 2 & 1 & 0
\end{tabular}
\caption{An instance where it is necessary to give a price of zero to some goods (which is a form of weakly increasing price curves) in order to support the maximum Nash or CES welfare allocation. Assume each good has supply 1. Nash welfare is maximized by splitting good 1 evenly between the two agents, and allowing agent 1 to purchase an equal quantity of good 2. This only possible if the price of good 2 is zero: otherwise, agent 1 is paying more than agent 2. It can be verified that this same allocation is also the maximum CES welfare allocation for any $\rho \in (-\infty, 0)\cup(0,1)$. For another interpretation, recall that the Fisher market equilibrium prices are the dual variables of the convex program for maximizing Nash welfare: thus the price of good 2 being zero corresponds to the fact that the supply constraint for good 2 is not tight in this instance.}
\label{ex:need-zero-prices}
\end{example}

%

We showed in Section~\ref{sec:ces-duality} that if $w_{ij} \in \{0,1\}$ for all $i \in N$ and $j \in M$, then for any $\rho \in (-\infty, 0) \cup (0,1)$, every maximum CES welfare allocation can be supported by price curves. One natural question is whether this result holds if we only assume that $\max_{j \in M} w_{ij} = 1$ for all $i \in N$. The answer is no, unfortunately, as demonstrated by the following theorem. Theorem~\ref{thm:L-infty-counter} only rules out $\rho$ in the range $(\frac{1}{2}, 1)$, but we conjecture that counterexamples exist for all $\rho \in (-\infty, 0) \cup (0,1)$.

\begin{theorem}\label{thm:L-infty-counter}
For agents with Leontief utilities where $\max_{j \in M} w_{ij} = 1$ for all $i \in N$, for every $\rho \in (\frac{1}{2}, 1)$, there exist instances where no maximum CES welfare allocation can be supported by price curves.
\end{theorem}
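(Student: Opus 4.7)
The plan is to exhibit an explicit counterexample for each $\rho \in (1/2, 1)$. Consider a $3$-agent, $2$-good instance (both goods of unit supply) with Leontief weights $w_1 = (1, w)$, $w_2 = (1,1)$, and $w_3 = (w, 1)$ for a parameter $w \in (0,1)$ to be chosen as a function of $\rho$. Each agent satisfies $\max_j w_{ij} = 1$ as required. Writing $(a, b, c) = (u_1, u_2, u_3)$ with bundles $x_i = u_i w_i$, I would solve the CES welfare convex program (as in the proof of Theorem~\ref{thm:bandwidth}) via KKT. Both supply constraints bind, and the $1 \leftrightarrow 3$ symmetry forces $a = c$ together with $\lambda_1 = \lambda_2$. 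The stationarity conditions for $a$ and $b$ then yield the ratio
\[
b/a = \big((1+w)/2\big)^{1/(1-\rho)},
\]
with $a$ and $b$ pinned down uniquely by $a(1+w) + b = 1$. Strict concavity of the CES objective guarantees this is the unique maximum CES welfare allocation.

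The critical structural fact I would then establish is that $aw > b$, or equivalently $w^{1-\rho} > (1+w)/2$. Both sides equal $1$ at $w = 1$, and a first-order Taylor expansion near $w = 1$ gives $w^{1-\rho} - (1+w)/2 \approx (\rho - \tfrac{1}{2})(1-w)$, which is strictly positive exactly when $\rho > 1/2$. Hence for every $\rho \in (1/2, 1)$ I can pick $w \in (0, 1)$ close enough to $1$ so that $aw > b$ in the optimal allocation. I expect this Taylor-expansion step to be the main obstacle: it is the place where the bound $\rho > 1/2$ enters, and some care is needed to verify that the resulting $w$ keeps the optimum in the regime where both supply constraints bind and the symmetric solution is correct.

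With $aw > b$ in hand, the contradiction is short. Assume that continuous, weakly increasing price curves $f_1, f_2$ support the optimum. By Lemma~\ref{lem:leontief-demand} each agent spends her full budget, giving $f_1(a) + f_2(aw) = 1$, $f_1(b) + f_2(b) = 1$, and $f_1(aw) + f_2(a) = 1$. Define $s(x) = f_1(x) + f_2(x)$, which is continuous and weakly increasing. Summing the first and third equations yields $s(a) + s(aw) = 2$, while the second gives $s(b) = 1$. Since $aw > b$, monotonicity of $s$ gives $s(aw) \geq s(b) = 1$; since $a \geq aw$, monotonicity gives $s(a) \geq s(aw)$; combined with $s(a) + s(aw) = 2$, I conclude $s(a) = s(aw) = 1$. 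But then agent $2$ can afford the bundle $(a,a)$ at cost $s(a) = 1$, attaining utility $a > b$, contradicting $(b,b) \in D_2(\f)$ and completing the proof.
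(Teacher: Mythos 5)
Your proof is correct, and your instance is the paper's instance in disguise: setting $w = 1-\varepsilon$ and permuting the agent labels recovers exactly the three-agent, two-good example the paper uses, and your key inequality $w^{1-\rho} > (1+w)/2$ is precisely the negation of the inequality the paper derives in its Case~3 (namely $(1-\varepsilon)^{\rho-1}(2-\varepsilon) \ge 2$), with the same degeneration to the threshold $\rho = 1/2$ as $w \to 1$. Where you genuinely diverge is in how the structural fact ``the all-ones agent is strictly dominated on both goods'' is exploited. The paper never solves for the optimum; it argues by cases that either this domination occurs --- in which case the dominated agent is \emph{locked} and Theorem~\ref{thm:laf} rules out any weakly increasing price curves --- or else the Case-3 inequality forces $\rho \le 1/2$. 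You instead pin down the optimum exactly via the $1\leftrightarrow 3$ symmetry plus strict concavity (which correctly gives uniqueness and hence $a=c$ and binding supply), prove the domination always holds for suitable $w$, and then close with a short self-contained contradiction: summing the budget identities of agents $1$ and $3$ from Lemma~\ref{lem:leontief-demand}, using monotonicity of $s = f_1+f_2$ together with $aw > b$ to force $s(a) = s(aw) = 1$, and observing that agent $2$ could then afford the strictly better proportional bundle $(a,a)$. This buys independence from the locked-agent machinery of Appendix~\ref{sec:laf} (only Lemma~\ref{lem:leontief-demand} is needed), at the modest cost of having to justify uniqueness and symmetry of the optimum, which you do. The derivative computation $\frac{d}{dw}\bigl(w^{1-\rho} - \tfrac{1+w}{2}\bigr)\big|_{w=1} = \tfrac{1}{2}-\rho < 0$ together with the value $0$ at $w=1$ makes the Taylor step you flagged rigorous, so there is no gap there.
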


\begin{proof}
Consider the following instance with two goods with supply 1, and three agents, whose weights are given by the following table:

\begin{center}
\begin{tabular}{c|cc} 
& good 1 & good 2\\
\hline
agent 1 & $1- \ep$ & 1   \\
agent 2 & 1 & $1 - \ep$ \\
agent 3 & 1 & 1  \\
\end{tabular}
\end{center}

Let $\x$ be a maximum CES welfare allocation. For brevity, we write $u_i = u_i(x_i)$. In the proof of Theorem~\ref{thm:bandwidth} given in Section~\ref{sec:ces-duality}, we used duality to show that for a fixed $\rho$, any maximum CES welfare allocation $\x$ has the form
\[
x_{ij} = w_{ij} \Big( \sum_{j \in M} q_jw_{ij} \Big)^{\frac{1}{\rho -1}}
\]
for some constants $q_1,\dots, q_m \in \bbrpos$. Let $\chi_i = \Big( \sum_{j \in M} q_jw_{ij} \Big)^{\frac{1}{\rho -1}}$. In our case, we have
\begin{align*}
\chi_1 =&\ \big((1-\ep)q_1 + q_2\big)^{\frac{1}{\rho -1}}\\
\chi_2=&\ \big(q_1 + (1-\ep) q_2\big)^{\frac{1}{\rho -1}}\\
\chi_3 =&\ \big(q_1 +  q_2\big)^{\frac{1}{\rho -1}}
\end{align*}
Thus we have $x_{ij} = w_{ij} \chi_i$ for all $i \in N$ and $j \in M$. We proceed by case analysis.

Case 1: $(1-\ep) \chi_1 > \chi_3$. In this case, we have
\[
x_{11} = (1-\ep)\chi_1 > \chi_3 = x_{31} \quad\quad \text{and}\quad\quad x_{12} = \chi_1 > \chi_3 = x_{32}
\]
So $x_{1j} > x_{3j}$ for every good $j$. Let $\A$ be the vector with $a_1 = 1$ and $a_i = 0$ for $i\neq 1$, and let $\B$ be the vector with $b_3 = 1$ and $b_i = 0$ for $i \ne 3$. Then $\A\succ\B$. Furthermore: the domination is strict at $x_{3j}$ for each good $j \in M$. This means that agent 3 is locked. Therefore by Theorem~\ref{thm:laf}, the $\x$ cannot be supported by price curves, and we are done.

Case 2: $(1-\ep)\chi_2 > \chi_1$. By a symmetrical argument, we have $x_{2j} > x_{3j}$ for every good $j$, so agent 3 is again locked, and we are done. 

Case 3: $(1-\ep) \chi_1 \le \chi_3$ and $(1-\ep)\chi_2 \le \chi_3$. This implies that $(1-\ep)^{\rho -1} \chi_1^{\rho -1} \ge \chi_3^{\rho -1}$ and $(1-\ep)^{\rho -1} \chi_2^{\rho -1} \ge \chi_3^{\rho -1}$. Note that the inequality flipped because $\rho - 1 < 0$. Therefore
\begin{align*}
(1-\ep)^{\rho -1} \chi_1^{\rho -1} + (1-\ep)^{\rho -1} \chi_2^{\rho -1} \ge&\  2\chi_3^{\rho -1}\\
(1-\ep)^{\rho -1} \Big((1-\ep)q_1 + q_2\Big) + (1-\ep)^{\rho -1} \Big(q_1 + (1-\ep) q_2\Big) \ge&\  2\Big(q_1 +  q_2\Big)\\
(1-\ep)^{\rho -1} (2-\ep)\Big(q_1 +q_2\Big) \ge&\  2\Big(q_1 +  q_2\Big)\\
\ln\Big((1-\ep)^{\rho -1} (2-\ep)\Big) \ge&\  \ln 2\\
(\rho - 1) \ln (1-\ep)\ge&\  \ln 2 - \ln (2-\ep)\\
\rho \le&\  1+\frac{\ln 2 - \ln (2-\ep)}{ \ln (1-\ep)}\\
\end{align*}
Note that the sign flipped in the last step because $\ln(1-\ep) < 0$.

The resulting right hand side is some real-numbered value, so whenever $\rho$ is greater than that, we obtain a contradiction. Taking the limit as $\ep$ goes to 0 shows us that the right hand side may be arbitrarily close to $\frac{1}{2}$. This shows that for any $\rho > \frac{1}{2}$, there exists an $\ep > 0$ such that in the above instance, no maximum CES welfare allocation can be supported by price curves.
\end{proof}

\subsection{Difficulties in analyzing linear utilities}\label{sec:linear-hard}

We assumed throughout the paper that agents have Leontief utilities. One natural question is whether our results extend to other classes of utilities: in particular, linear utilities. The answer is no, in general. A \emph{linear} utility function is defined by
\[
u_i(x_i) = \sum_{j \in M} w_{ij} x_{ij}
\]
where $w_{ij}$ is still the weight that agent $i$ has for good $j$.

Leontief utilities have the very nice property that agents always purchase goods in a fixed proportion. It does not matter exactly how the cost within each bundle was distributed across goods, because each agent will always purchase goods in the same proportions, regardless of the underlying costs. We do not have this luxury with linear utilities. In this setting, the proportions in which each agent purchases goods depend on a complex interaction between her values for the goods, and the price curves. This makes it very difficult to reason about what agents will purchase given a set of price curves. In fact, each agent's optimization problem
\[
\argmax_{x_i \in \bbrpos^m:\ C_\f(x_i) \leq 1} u_i(x_i)
\]
may not even be convex.

Thus in order for $(\x, \f)$ to form a price curve equilibrium for linear utilities, a complex set of conditions would need to be satisfied. We note that $C_\f(x_i) = 1$ is still necessary, and so GDF is still a necessary condition (for strictly increasing price curves), but it is certainly not sufficient (Example~\ref{counter:ef-gdf}).

\begin{example}[tb]

\centering
\begin{tabular}{ c|cc} 
& agent 1 & agent 2\\
\hline
 good 1 & 4 & 0  \\
 good 2 & 0 & 4  \\
 good 3 & 1 & 2  \\
 good 4 & 2 & 1
\end{tabular}
\caption{For two agents with linear utilities, group-domination-freeness is not sufficient for the existence of price curves. Consider the instance where the agents' weights are given as above and the available supply of each good is 1. Define $\x$ by $x_{11} = x_{13} = x_{22} = x_{24} = 1$ and $x_{ij} = 0$ otherwise. This allocation is EF and GDF. To see that $\x$ cannot be supported by price curves, let $\tilde{j} = \argmin_{j \in \{3,4\}} f_j(1)$. If $\tilde{j} = 3$, then the cost of good 3 is at most the cost of good 4, so agent 2 would buy good 3 instead of buying good 4. Similarly, if $\tilde{j} = 4$, then the cost of good 4 is at most the cost of good 3, so agent 1 would buy good 4 that instead of buying good 3.}
\label{counter:ef-gdf}
\end{example}

\section{Omitted proofs}\label{sec:extra-proofs}

\lemLeontiefDemand*

\begin{proof}
$(\impliedby)$ Suppose the above conditions hold, but $x_i \not\in D_i(\f)$. Then there exists $x_i'\in D_i(\f)$ such that $u_i(x_i') = u_i(x_i) + \ep$ for some $\ep > 0$. Since we assume that $x_{ij}$ is proportional to $w_{ij}$, agent $x$ must receive at least $\ep w_{ij}$ more of each good $j$ in order to increase her utility by $\ep$. Furthermore, since price curves are increasing, $f_j(x_{ij}') \geq f_j(x_{ij})$ for every good $j$. However, condition (b) of the lemma implies that there exists a good $j$ such that
\[f_j(x_{ij}') \geq f_j(x_{ij} + \ep w_{ij}) > f_j(x_{ij})\]
and thus
\[C_\f(x_i') =\ \sum_{j \in M} f_j(x_{ij}') > \sum_{j\in M} f_j(x_{ij}) = 1\]
which contradicts $x_i' \in D_i(\f)$.

$(\implies)$ Now suppose that at least one of the two conditions of the lemma does not hold. If $C_\f(x_i) \neq 1$, then either $C_\f(x_i) > 1$ and the cost exceeds the budget, or $C_\f(x_i) < 1$ so by continuity agent $i$ could purchase more of every good and increase her utility. Either way $x_i \not\in D_i(\f)$. Thus assume that for every $j$, there exists an $\ep_j > 0$ such that $f_j(x_{ij} + \ep_j w_{ij}) = f_j(x_{ij})$. Then consider the bundle $x_i'$ defined by $x_{ij}' = x_{ij} + \ep_j w_{ij}$. This bundle has the same cost as $x_i$, but
\[u_i(x_i') = \min_{j \in M} \frac{x_{ij} + \ep_j w_{ij}}{w_{ij}} > \min_{j \in M} \frac{x_{ij}}{w_{ij}} = u_i(x_i)\]
contradicting $x_{i} \in D_i(\f)$.
\end{proof}


\bandwidthConverse*

\begin{proof}
First, for Nash welfare ($\rho = 0$), this is exactly Eisenberg and Gale's result: the linear-pricing equilibrium allocations are exactly the allocations maximizing Nash welfare~\cite{eisenberg_aggregation_1961, eisenberg_consensus_1959}. Thus for the rest of this proof, we assume $\rho \ne 0$.

The proof follows a duality argument very similar to the proof of Theorem~\ref{thm:bandwidth}. We use the same convex program for maximizing CES welfare, which, as stated in the proof of Theorem~\ref{thm:bandwidth}, satisfies strong duality. Suppose $(\xs, \g)$ is a PCE, where $g_j(x) = q_j^* x^{1-\rho}$ for all $j \in M$ for nonnegative constants $q_1^*\dots q_m^*$. Let $u_i^* = u_i(x_i^*)$ be agent $i$'s utility for $\xs$, and let $\lambda_{ij}^* = q_j^*$. Let $\mathbf{u^*} = u_1^*\dots u_n^*$, let $\mathbf{q^*} = q_1^*\dots q^*_m$, and let $\boldsymbol{\lam^*}$ represent the vector of all $\lambda_{ij}^*$'s.

Since our convex program satisfies strong duality, the \emph{Karush-Kuhn-Tucker} (KKT) conditions are both necessary and sufficient for optimality. Specifically, if we can show that $(\xs, \mathbf{u^*}, \boldsymbol{\lam^*}, \mathbf{q^*})$ satisfies the KKT conditions, then $(\xs, \mathbf{u^*})$ is optimal for the primal. The KKT conditions are primal feasibility, dual feasibility, complementary slackness, and stationarity. Since $\xs$ is a valid allocation and $u_i^*$ is defined by $u_i^* = u_i(x_i^*)$ for all $i \in N$, primal feasibility of $(\xs, \mathbf{u^*})$ immediately follows. Since $q_j^* \ge 0$ for all $j \in M$ by assumption and $\lambda_{ij}^* \ge 0$ by definition, we have dual feasibility as well.

Complementary slackness requires that for every constraint, either the constraint is tight, or the corresponding dual variable is equal to 0. For the supply constraints, we need to show that for all $j \in M$, we have $\sum_{j \in M} x_{ij}^* = s_j$ whenever $q_j^* = 0$. This is satisfied by assumption. For the other constraints, we need to show that for all $i \in N, j \in R_i$, either $\lambda_{ij}^* = 0$ or $x_{ij}^* = u_i^*$. We will show something slightly stronger: either $\lambda_{ij}^* = 0$ or $x_{ij}^* = w_{ij} u_i^*$ (for all $j \in M$, not just in $R_i$). Since $\lambda_{ij}^* = q_j^*$, we have $\lambda_{ij}^* = 0$ when $q_j^* = 0$. Suppose $q_j^* \ne 0$ and $x_{ij}^* \ne w_{ij} u_i^*$. First, we must have $x_{ij}^* \ge w_{ij} u_i^*$ by the definition of Leontief utility, which implies that $x_{ij}^* > w_{ij} u_i^*$. Also, since $q_j^* \ne 0$, agent $i$ must be spending money on good $j$; furthermore, she is purchasing more of good $j$ than she needs. Instead, she could purchase $x'_{ij} = w_{ij} u_i^*$ and have some leftover money, which she could use to buy slightly more of every good and increase her utility. This would imply that $x_i^*$ is not in agent $i$'s demand set, which contradicts $(\xs, \f)$ being a price curve equilibrium. Therefore we have $x_{ij}^* = w_{ij} u_i^*$ whenever $q_j \ne 0$, which satisfies the complementary slackness conditions.

For stationarity, we need to show that the gradient of $L$ with respect to $\x$ and $\bfu$ vanishes at $(\mathbf{x^*}, \mathbf{u^*}, \boldsymbol{\lam^*}, \mathbf{q^*})$ for every coordinate that is not zero. Specifically, we need to show that for each variable $y$, either  $\frac{\partial L}{\partial y} = 0$, or $y = 0$ and $\frac{\partial L}{\partial y} \le 0$. First, for $j \in R_i$ we have $\mfrac{\partial L}{\partial x_{ij}} (\mathbf{x^*}, \mathbf{u^*}, \boldsymbol{\lam^*}, \mathbf{q}^* = \lam_{ij}^* - q_j^*$; by definition, this is equal to zero. For $j \not\in R_i$, we have $\mfrac{\partial L}{\partial x_{ij}} (\mathbf{x^*}, \mathbf{u^*}, \boldsymbol{\lam^*}, \mathbf{q}^*) = - q_j^*$. This is always nonpositive, since $q_j^* \ge 0$. We showed before that $x_{ij}^* = w_{ij} u_i^*$ whenever $q_j \ne 0$. Since $w_{ij} = 0$ for $j \not\in R_i$, we have either $x_{ij}^* = 0$ or $q_j^* = 0$; this satisfies the stationarity condition for those variables.

Finally, consider $u_i^*$: since $(\xs, \f)$ is a price curve equilibrium, everyone must have positive utility (any agent could always buy a nonzero amount of every good, which would give her nonzero utility). Thus we need to show that $\mfrac{\partial L}{\partial u_i} (\mathbf{x^*}, \mathbf{u^*}, \boldsymbol{\lam^*}, \mathbf{q}^*) = {u_i^*}^{\rho - 1} - \sum_{j \in R_i} \lam_{ij}^* = 0$ for each $i \in N$. Since $(\xs, \f)$ is a price curve equilibrium, each agent must be exhausting her entire budget. Thus $C_\f(x_i^*) = 1$ for all $i \in N$, which gives us:
\[
C_\f(x_i^*) = \sum_{j \in M} f_j(x_{ij}^*) = \sum_{j \in M} q_j^* {x_{ij}^*}^{1-\rho} = 1
\]
Furthermore, as argued above, we have $x_{ij}^* = w_{ij} u_i^*$ whenever $q_j \ne 0$. Therefore
\begin{align*}
\sum_{j \in M} q_j^* {x_{ij}^*}^{1-\rho} =&\ \sum_{j: q_j^* \ne 0} q_j^* {x_{ij}^*}^{1-\rho}\\
=&\ \sum_{j: q_j^* \ne 0} q_j^* (w_{ij} u_i^*)^{1-\rho}\\
=&\ {u_i^*}^{1-\rho}\sum_{j: q_j^* \ne 0} q_j^* w_{ij}^{1-\rho} \\
=&\ {u_i^*}^{1-\rho}\sum_{j: q_j^* \ne 0} q_j^* w_{ij}
\end{align*}
where the last equality is because $w_{ij} \in \{0,1\}$. Therefore we have \begin{align*}
{u_i^*}^{1-\rho}\sum_{j: q_j^* \ne 0} q_j^* w_{ij} =&\ 1\\
\sum_{j: q_j^* \ne 0} q_j^* w_{ij} =&\ {u_i^*}^{\rho - 1}\\
{u_i^*}^{\rho - 1} - \sum_{j: q_j^* \ne 0} q_j^* w_{ij} =&\ 0 \\
{u_i^*}^{\rho - 1} - \sum_{j \in R_i} q_j^* =&\ 0\\
{u_i^*}^{\rho - 1} - \sum_{j \in R_i} \lam_{ij}^* =&\ 0
\end{align*}
Therefore $\mfrac{\partial L}{\partial u_i} (\mathbf{x^*}, \mathbf{u^*}, \boldsymbol{\lam^*}, \mathbf{q}^*) = {u_i^*}^{\rho - 1} - \sum_{j \in R_i} \lam_{ij}^*$ is indeed 0. Thus the KKT conditions are satisfied. Therefore $(\xs, \mathbf{u^*}, \boldsymbol{\lam^*}, \mathbf{q^*})$ is optimal for $L$, which implies that $(\mathbf{x^*}, \mathbf{u^*})$ is optimal for the primal: in other words, $\xs$ is a maximum CES welfare allocation.
\end{proof}

\end{document}